\theoremstyle{plain}
\newtheorem{theorem}{Theorem}[section]
\newtheorem*{thm*}{Theorem}
\newtheorem{corollary}[theorem]{Corollary}
\newtheorem{lemma}[theorem]{Lemma}
\theoremstyle{definition}
\theoremstyle{remark}
\newtheorem{remark}[theorem]{Remark}
\newtheorem{example}[theorem]{Example}
\newcommand\range{\operatorname{Range}}
\newcommand{\N}{\mathcal{N}}
\newcommand{\calA}{\mathcal{A}}
\def\hpi{\hat{\pi}}
\def\R{{\mathds R}}
\newcommand{\Exp}{\mathds{E}}
\newcommand{\Prob}{\mathds{P}}
\newcommand{\abs}[1]{\left\vert#1\right\vert}
\newcommand{\Set}[1]{\left\{#1\right\}}
\newcommand{\seq}[1]{\left<#1\right>}
\begin{document}

\title{Optimal investment with  insurable background risk and nonlinear portfolio allocation frictions}


\author[1]{Hugo E. Ramirez \thanks{\href{mailto:hugoedu.ramirez@urosario.edu.co}{hugoedu.ramirez@urosario.edu.co}}}

\author[1]{Rafael Serrano \thanks{\href{mailto:rafael.serrano@urosario.edu.co}{rafael.serrano@urosario.edu.co}}}

\affil[1]{Universidad del Rosario, Calle 12C No. 4-69, Bogot\'a, Colombia}

\maketitle

\abstract{
We study investment and insurance demand decisions for an agent in a theoretical continuous-time expected utility maximization model that combines risky assets with an (exogenous) insurable background risk. This risk takes the form of a jump-diffusion process with negative jumps in the return rate of the (self-financed) wealth.  The main distinctive feature of our model is that the agent's decision on portfolio choice and insurance demand causes nonlinear friction in the dynamics of the wealth process. We use the dynamic programming approach to find optimality conditions under which the agent assumes the insurable risk entirely, or partially, or purchases total insurance against it. In particular, we consider differential and piece-wise linear portfolio allocation frictions, with differential borrowing and lending rates as our most emblematic example. Finally, we present a mutual-fund separation result and illustrate our results with several numerical examples when the adverse jump risk has Beta distribution.
}


\section{Introduction}

Economic and financial decision-making under uncertainty usually occurs in the presence of multiple risks and incomplete markets. Portfolio choices and decisions
about endogenous risks sometimes must be made while simultaneously facing one or
more  exogenous ``background risks". Background risk typically refers to the uncertainty that affects
a decision maker's wealth level but cannot be hedged in the financial
markets. Most early published works on background risk are set in a single-period framework and employ the notions of stochastic dominance or risk vulnerability. \cite{eeckhoudt1996changes} examine background wealth deteriorations taking the form of changes in risk, in both general first- and
second-degree stochastic dominance, and find necessary and sufficient conditions for each of
these two types changes in background risk to imply a more risk-averse behavior. \cite{gollier1996risk} introduce the concept of risk vulnerability: additional and independent background risk increases a decision maker’s risk aversion when that risk is unfair (i.e., has non-positive expected value).

In the context of investment decisions, most works on background risk assume it is independent of portfolio choice decisions. However, for
many decision-making situations, some dependence between investment and background risks is expected. \cite{heaton2000portfolio} present a decision-theoretic model for portfolio selection with background risk and find considerable correlations
of stock returns with wage income and proprietary
income. Moreover, empirical
evidence on exchange-rate exposure of U.S. multinational
firms also demonstrate that correlations between returns
and exchange rates may be positive or negative,
depending on the company’s profile (e.g., exporting
operations versus importing operations), see e.g., \cite{allayannis2001exchange}.

In the related literature, there are a few exceptions to the typical assumption
of independence. For instance, \cite{tsetlin2005risky} study optimal decisions in a one-period expected utility framework in which the decision maker undertakes risky projects that are correlated with both additive
and multiplicative background risks. They document the
importance of the direction and degree of dependence between
the project risk and the background risk, and prove that it may
be optimal for a risk-averse agent to undertake a project with zero
or even negative expected returns in the presence of an additive
negatively correlated background risk.  \cite{franke2006multiplicative} study a similar problem for multiplicative background risk, extending the results of \cite{gollier1996risk} and finding conditions on
preferences that lead to more cautious behavior. \cite{franke2011risk} examine simultaneous effects of both additive and multiplicative risks on optimal portfolio choice. They  rationalize certain paradoxical choice behavior in that context and show how background risks might lead to a seemingly U-shaped relative risk aversion for a representative investor.

Although background risk usually emerges from many non-financial events such as pure insurance losses, inflation, labor
income, political turmoil, natural disasters, real estate losses, tax liabilities, etc., it may sometimes be hedged away partially or totally via insurance. In this case, inter-temporal models are more appropriate when insurance demand for background risk is analyzed jointly with portfolio or consumption decisions. Early works on inter-temporal insurance demand include \cite{briys1986insurance} who considers a continuous-time model,  assumes that the loss is proportional to the wealth and
alludes to an optimal coinsurance, proving that if the utility function is isoelastic then optimal insurance has the form of coinsurance and the deductible must be a function of wealth (and time).

In a similar model, \cite{gollier1994insurance} analyzes the optimal dynamic strategy of a risk-averse agent bearing an insurable risk to determine whether precautionary saving is superior to insurance in the long run. He assumes that the loss risk follows a Poisson process and that the loss function is not a function of wealth. \cite{gollier1994insurance} also shows that the demand for insurance vanishes in the long run if the loading factor exceeds a given strictly positive critical value, derives the optimal strategy for capital accumulation and insurance demand in the constant relative risk aversion case, and proves that compulsory full insurance reduces the rate of consumption if and only if risk aversion exceeds one.  \cite{touzi2000optimal} studies the stochastic control problem of maximizing expected utility from terminal wealth where the wealth process  is subject to shocks
produced by a general marked point process, and the agent allocates wealth between a
nonrisky asset and a (costly) insurance strategy which allows reducing
the level of the shocks, similar to the setting of \cite{briys1986insurance} and \cite{gollier1994insurance}. \cite{touzi2000optimal} connects the agent’s optimization problem to a suitable dual stochastic control problem in which the constraint on
the insurance strategy disappears,  finds a general existence result
for the dual problem, and obtains an explicit characterization for power (and logarithmic)
utility functions and linear insurance premium.

\cite{moore2006optimal} use the dynamic programming (DP) approach to extend \cite{briys1986insurance} and \cite{gollier1994insurance} by allowing the horizon to be random, and find  that if the premium is proportional to the expected payout, then the optimal per-claim insurance is deductible insurance. \cite{perera2010optimal} uses the convex duality  approach to obtain closed-form solutions for the optimal investment, consumption and insurance strategies of an individual with CARA utility and in the presence of an insurable risk that follows a Lévy process. \cite{lin2012risky} examine risky asset allocation and consumption rate decisions in the presence of background risk, using the DP approach. They find that the optimal allocation in risky assets reflects the agent’s risk attitude when background risk is related to investment risk, and that optimal insurance hedges investment risk and balances the growth and the volatility of consumption. \cite{mnif2012numerical,mnif2013optimal} studies the problem of finding the
optimal insurance strategy that reduces exposure to jump risk, similar to the setting of \cite{touzi2000optimal}, using  a suitable dual stochastic control problem. \cite{mnif2012numerical} characterizes the dual value function as the unique
viscosity solution of the corresponding Hamilton-Jacobi-Bellman (HJB) variational inequality, while \cite{mnif2013optimal}  approximates the optimal solution numerically  using a policy-iteration algorithm. 


The present paper studies joint decisions regarding risky asset allocation and insurance demand for a representative risk-averse agent in a finite-horizon continuous-time model, comparable with \cite{touzi2000optimal} and \cite{lin2012risky}. Taking dynamic portfolio selection in a Black-Scholes-type model as a generic example for decisions under risk, we extend the standard model of combining risky assets with an (exogenous) insurable background risk in the form of a jump-diffusion process  with negative jumps. The main distinctive feature of our model is that the agent's decision on portfolio choice and insurance cause nonlinear frictions in the dynamics of the wealth process. More concretely, we consider in the agent's wealth mean rate of return a (possibly non-linear) drift friction term $f(\pi,\kappa)$ that depends on the portfolio allocation weights $\pi$ and the fraction $\kappa$ of insurable risk assumed by the agent\footnote{$1-\kappa$ is the agent's insurance demand.}. This friction term incorporates the rate of return of an additional endogenous cash flow relative to the value of investments in the financial market and insurance costs. We will focus on two cases: first, $f$ separable having the form $f(\pi,\kappa)=g(\pi)-p(\kappa)$ with $p$ differentiable and $g$ either smooth or piece-wise linear (e.g., funding costs arising from differential borrowing and lending rates), and second, the case of a linear insurance premium function $f(\pi,\kappa)=-(1-\kappa)q(\pi)$ with premium rate $q(\pi)$ depending on the portfolio composition. In particular, our setting also incorporates nonlinear insurance premium schedules, which help enforce consumer self-selection and reduce insurers' adverse selection risk.\footnote{Usually, insurance pricing considers a uniform price per unit of coverage, typically based on the net expected reservation premium. However, this may lead to various problems, with adverse selection being the most predominant example. Indeed, insurers suffer adverse effects when they decide to extend coverage to individuals whose actual risk is substantially higher than the risk known by the insurer, hence offering insurance at a cost that does not accurately reflect actual risk exposure.}

Most works on dynamic utility maximization with nonlinear portfolio allocation frictions use the method of convex duality, see e.g. \cite{cuocoliu}, \cite{roche2003}, \cite{long2004investment}, \cite{kleinrogers} and \cite{heunis2015}. We do not employ this method since there are not state constraints leading to the HJB variational inequality studied by \cite{mnif2012numerical,mnif2013optimal}. Instead, Lemma \ref{lemma-optcond} below allows us to use directly the HJB partial differential equation associated with the (primal) optimal control problem for constant relative risk aversion (CRRA) power and log-utility preferences. 

The flexibility of the CRRA setting enables us to derive precise theoretical results and allows for a straightforward and accessible sensitivity analysis. Moreover, since the diffusion part of the insurable background risk  is assumed to be stochastically dependent on the asset returns, the obtained optimal decisions can be very different from the decisions that would be recommended if the correlation and/or frictions were ignored, which emphasizes the importance of understanding the direction and degree of dependence. Given the form of the final wealth, our setting can also be seen as a dynamic version of models with stochastic background wealth and insurable multiplicative risk studied by  \cite{tsetlin2005risky}, \cite{franke2006multiplicative},  \cite{franke2011risk}, \cite{franke2018risk}.  Finally, the recent works \cite{bolton2019optimal} and \cite{hong2020mitigating} use DP in similar continuous-time models in which firms buy insurance against idiosyncratic productivity shocks and disaster exposure, respectively, but with no investment allocation frictions. The following are the main contributions of our work.
\begin{enumerate}
	\item We extend the works of \cite{touzi2000optimal} and \cite{lin2012risky} to include nonlinear portfolio allocation frictions. In a significantly general setting, we characterize optimal investment and insurance demand strategies for a representative agent with CRRA preferences and find conditions under which the agent assumes the insurable risk entirely, partially, or purchases total insurance against it. See Theorem \ref{g-theorem} below. 	In particular, we prove that the insurance demand increases with the first-order stochastic dominance of the jump size and the magnitude of the jump arrival rate, see Corollary \ref{FOSD} below.
	
	\item In the case of differential rates for lending and borrowing, this is the first paper -to the best of our knowledge- that studies such frictions jointly with insurance demand for background risk in a dynamic setting. 	We find sufficient conditions (Corollary \ref{cor-diff-rates}) under which the agent responds to the insurable background risk by investing at the (lower) lending risk-free rate $r$, holding only investments in risky assets, or leveraging the risky-asset portfolio by borrowing at the (higher) funding rate $R.$ We find similar results for the case of a large investor with piecewise constant price impact on the drift of the risky asset, see Corollary \ref{large} below.
	
\item We prove a mutual-fund separation result (Theorem \ref{mutual}) for the case of differential rates which shows that optimal portfolio allocations move along one-dimensional segments. In particular, the optimal allocation for a given risk tolerance level can be obtained as a combination of two mutual funds. This brings to light the relevance of our results for CRRA preferences.
\end{enumerate}
The organization of this paper is as follows. In Section 2 we present the dynamics of the risky investments, insurable background risk, nonlinear portfolio allocation frictions $f(\pi,\kappa)$ and the agent's wealth process. We also define the risk-averse utility maximization control problem. In Section 3 we introduce the Hamilton-Jacobi-Bellman equation for the case of CRRA utility. In Section 4 we consider the case  $f(\pi,\kappa)=g(\pi)-p(\kappa)$ with $p$ differentiable, with special focus on the cases $g$ differentiable (for one risky asset) or $g$ piece-wise linear (for multiple risky assets). We present explicit characterization and numerical examples of optimal decisions for agents with exposure to differential rates, and prove a mutual-fund separation Theorem for this case. Finally, in Section 5 we briefly address linear premium functions $f(\pi,\kappa)=-(1-\kappa)q(\pi)$ with premium rate depending on risky asset allocation. In Section 6 we close out the paper with a few conclusions of our work.

\section{Insurable risk model with non-linear portfolio allocation frictions}

Let $T>0$ be a fixed time horizon. At each time $t\in[0,T]$ the representative agent holds a portfolio of investments in $d$ risky assets with price processes $S^1,\ldots,S^d$ and a risk-free money market account $B$ following a Black-Scholes model of the form
\begin{align*}
	dS_t^i&=S_t^i\Bigl[\mu^i\,dt+\sum_{k=1}^d\sigma^{ik}\,dW_t^k\Bigr], \  S_0^i>0, \ \ i=1,\ldots,d\\
	dB_t&=B_tr\,dt, \ B_0=1
\end{align*}
where $W=(W^1,\ldots,W^d)^\top$ is a $d$-dimensional Brownian motion defined on a complete probability space $(\Omega,\Prob,\mathcal{F})$ endowed with a filtration $\mathds F.$ For each $i=1,\ldots,d,$ let $\pi^i_t$ denote the fraction of wealth invested in $i$-th risky asset at time $t\in[0,T]$, so $1-\pi_t^\top\underline 1$ denotes the proportion of wealth invested in the risk-free asset. As usual, we refer to $\pi_t$ as the portfolio proportion process. We assume the return rate of the (self-financed) representative agent's wealth is subject to exogenous shocks modeled by a jump-diffusion process of the form
\[
X_t=b\bar W_t+\sum_{\tau_n\le t} Y_n
\]
where $\bar W$ is a Brownian motion satisfying  $\seq{W^k,\bar W}_t=\rho^{k}t$ with {$\rho^{k}\in[-1,1]$}, $b\in\R,$ and the marked point process $(\tau_n,Y_n)_{n\geq 1}$ is independent of $W$ and $\bar W.$ The correlations $\rho^{k}$ model the dependence between the log-prices of the financial assets and the (Gaussian) fluctuations of the exogenous shocks. In the remainder, we denote $\rho:=(\rho^1,\ldots,\rho^d)^\top\in\R^{d}$ which will be considered as a column vector in the remainder. The representative agent has the possibility of reducing these exogenous shocks on
the wealth process by buying insurance, which in turn causes (endogenous) frictions that depend on the risky asset weights and the insurance strategy. More concretely, let $\kappa=(\kappa_t)_{t\in[0,T]}$ be a $\mathds F$-predictable process with values in $[0,1]$ that represents the fraction of exogenous risk assumed by the agent so that the infinitesimal change of the exogenous shocks process $dX_t$ is reduced to $\kappa_t\,dX_t.$ We assume that there exists a function
\[
f:\R^d\times [0,1]\to \R
\]
which is concave, strictly decreasing in the second variable, so that the wealth process $V^{\pi,\kappa}$ of the agent evolves according to the controlled linear SDE of jump-diffusion type
\begin{equation}\label{eqVnon-linear}
		dV_t=V_{t-}\Bigl\{[r+f(\pi_t,\kappa_t)]\,dt+\pi_t^\top[(\mu-r\underline 1)\,dt+\sigma\,dW_t]-\kappa_t\,dX_t\Bigr\}.
\end{equation}
The function $f(\pi,\kappa)$ represents the (possibly negative) rate of return of an additional endogenous cash flow, relative to the value of investments in the financial market and the cost of insurance. We will focus primarily on two cases:
\begin{itemize}
	\item $f(\pi,\kappa)=g(\pi)-p(\kappa)$ where $g(\pi)$ captures nonlinear investment frictions, usually referred to as margin payment function (see e.g. \cite{cuocoliu}), and $p:[0,1]\to\R_+$ with $p(1)=0$ is the (possibly nonlinear) insurance premium rate function as in \cite{touzi2000optimal}. Nonlinear insurance pricing is a primary mechanism for effecting consumer self-selection and helps reduce adverse selection risk for insurers, see e.g. \cite{schlesinger1983nonlinear}. As for the market frictions, we have for instance the case of higher interest rates for borrowing than lending
	\begin{equation}\label{mpf-dr}
		g(\pi):=-(R-r)(\pi^\top\underline{1}-1)^+, \ \ \pi\in\R^d
	\end{equation}
or, in the one-dimensional case,  price pressure of a large-investor on the risky asset\footnote{Buying the risky asset depresses its expected return, while shorting it has the opposite effect, see e.g. \cite{cuoco1998} or \cite{long2004investment}.}
\[
g(\pi)=\pi[m^+ \mathbf{1}_{\Set{\pi\geq 0}}+m^- \mathbf{1}_{\Set{\pi<0}}], \ \ \pi\in\R
\]
with  $m^+<0<m^-.$

\item $f(\pi,\kappa)=-(1-\kappa)q(\pi)$ with $q$ a positive-valued premium rate that depends on the portfolio weights.
\end{itemize}
We assume $Y_n<1$ for all $n\in\mathds N,$ which guarantees positivity of the wealth process $V^{\pi,\kappa}.$ Indeed, for an initial endowment $x>0,$ the solution to equation (\ref{eqVnon-linear}) is given by
\[
V_t^{\pi,\kappa}=xK_t^{\pi,\kappa}\prod_{\tau_n\le t}(1-\kappa_{\tau_n}Y_{n})
\]
with
\begin{align}
K_t^{\pi,\kappa}&=\exp\biggl(\int_0^t\Bigl[r+f(\pi_s,\kappa_s)+\pi_s^\top(\mu-r\underline 1)\Bigr.\biggr.\nonumber\\
&\biggl.\Bigl.-\frac{1}{2}\Bigl( \bigl\lvert \sigma^\top\pi_s \bigr\rvert^2+(b\kappa_s)^2-2b\kappa_s\pi_s^\top\sigma\rho\Bigr)\Bigr]\,ds+\int_0^t \pi_s^\top\sigma\,dW_s-\int_0^t\kappa_sb\,d\bar W_s\biggr).\nonumber
\end{align}
Hence, our setting can be seen as a dynamic version of models with stochastic background wealth and insurable multiplicative risk, see e.g. \cite{tsetlin2005risky}, \cite{franke2006multiplicative},  \cite{franke2011risk}, \cite{franke2018risk}. Examples of such models, with $K^{\pi}:=K^{\pi,1}$ (no insurance demand) include the following
\begin{enumerate}
\item $K^\pi$ is the profit of a firm that operates internationally, and $1-Y_n$ is the change in the associated exchange rate at time $\tau_n.$

\item $K^{\pi}$ is the nominal wealth, value, or profit of an investment fund, and $1-Y_n$ is the price deflator or purchasing power index reflecting uncertain inflation over the interval $(\tau_{n-1},\tau_n].$ 

\item $K^{\pi}$ is the pretax profits of a firm, and $1-Y_n$ is the firm’s retention rate net of taxes over the period $(\tau_{n-1},\tau_n]$, where tax rates are random due to tax legislation uncertainty.

\item $K^{\pi}$ is the portfolio value of a pension fund, and $1-Y_n$ is the return on a mandatory annuity account, say a default, catastrophe, or mortality-linked security, that rolls over the proceeds from $K^\pi$ after time $\tau_n.$

\item $K^{\pi}$ is the portfolio value of a pension fund, and $1-Y_n$ is the return on a mandatory annuity account, say a default, catastrophe, or mortality-linked security, that rolls over the proceeds from $K^\pi$ after time $\tau_n.$
 \end{enumerate}
Finally, the returns of the risky investment can also be used to model the return rate of capital accumulation, as it is now widely used in macro-finance and economic growth AK models: $\pi$ represents the investment capital ratio, and $g(\pi)$ captures the adjustment and depreciation costs that the firm incurs in the capital investment process, so $K^\pi$ is the total stock of capital of a firm in the AK model, including physical capital as traditionally measured, but also human capital and firm-based intangible capital (such as patents, know-how, brand value, and organizational capital). This is a common assumption in the recent literature on the q theory of investment. The jumps can be used to capture either productivity shocks or disaster risk, see e.g. \cite{bolton2019optimal} and \cite{hong2020mitigating}. 

The following is the formulation of the risk-averse optimization problem. Let $U(x)$ be a utility function satisfying the usual Inada conditions. We denote with $\mathcal A(t,x)$ the set of admissible strategies $(\pi,\kappa)$ for which $\Exp\bigl[U(-V_T^{\pi,\kappa})^+\,\bigl\lvert \,V_t^{\pi,\kappa}=x\bigr]$ is finite. The goal of the representative agent is to maximize the expected final-wealth utility functional $\Exp\left[U(V_T^{\pi,\kappa})\,\bigl\lvert \, V_t^{\pi,\kappa}=x\bigr.\right]$ over all admissible strategies $(\pi,\kappa)\in\mathcal A(t,x).$ For this, we define the time-dependent optimal value function
\begin{equation}\label{opt-vf}
\vartheta(t,x):=\sup_{(\pi,\kappa)\in\calA(t,x)}\Exp\left[U(V_T^{\pi,\kappa})\,\bigl\lvert\,V_t^{\pi,\kappa}=x\bigr.\right].
\end{equation}

\section{HJB equation for CRRA preferences}

In what follows, we assume the random variables $\Set{Y_n}_{n\in\mathds N}$ are i.i.d. and the counting process $N_t=\sum_{n=1}^\infty\mathbf 1_{\Set{\tau_n\le t}}$ is a Poisson process with intensity $\lambda>0.$ If the optimal value function (\ref{opt-vf}) is sufficiently differentiable, it satisfies the non-linear second-order integro-differential equation, usually referred to as Hamilton-Jacobi-Bellman (HJB) equation
\[
\frac{\partial \vartheta}{\partial t}+\sup_{{\substack{\pi\in\R^d\\\kappa\in[0,1]}}}[\mathcal L^{\pi,\kappa}\vartheta](t,x)=0
\]
with final condition
\[
\vartheta(T,x)=U(x)
\]
where, for each $\pi\in\R^d$ and $\kappa\in[0,1],$ $\mathcal L^{\pi,\kappa}$ is the second-order integro-differential operator
\begin{align*}
	[\mathcal L^{\pi,\kappa}\vartheta]&(x)= \,x\Bigl[r+f(\pi,\kappa)+\pi^\top\bigl(\mu-r\underline 1\bigr)\Bigr]\frac{\partial \vartheta}{\partial x}\\
	&+\frac{x^2}{2}\left[\bigl\lvert \sigma^\top\pi \bigr\rvert^2+(b\kappa)^2-2b\kappa b\pi^\top\sigma\rho\right]\frac{\partial^2\vartheta}{\partial x^2}
	+\lambda\Exp\left[\vartheta(x(1-\kappa Y ))\right]-\lambda\vartheta(x).
\end{align*}
Conversely, the so-called verification Theorem links the solution of the above HJB equation with sufficient conditions for existence of optimal strategies. Suppose further that agent has a von Neumann–Morgenstern CRRA utility function of the form
\[
U_\eta(x)=
\left\{
\begin{array}{ll}
	\frac{x^{1-\eta}}{1-\eta}, &  \eta\in(0,+\infty)\setminus\Set{1}, \\\\
	\ln x, & \eta =1.
\end{array}
\right.
\]
We employ the following guess for the value function
\[
v(t,x)=
\left\{
\begin{array}{ll}
	\theta_\eta(t)\frac{x^{1-\eta}}{1-\eta}, & \eta\in(0,+\infty)\setminus\Set{1}, \\\\
	\theta_1(t)+\ln x, & \eta =1,
\end{array}
\right.
\]
with $\theta_\eta:[0,T]\to[0,\infty)$ differentiable and positive-valued. Substituting in the HJB equation, we see that the maximization problem in the HJB equation is reduced to maximizing $f(\pi,\kappa)+H(\pi,\kappa;\eta)$ over $(\pi,\kappa)\in\R^d\times[0,1]$ with
\[
H(\pi,\kappa;\eta):=\pi^\top (\mu-r\underline 1)
-\frac{\eta}{2}\left[\bigl\lvert \sigma^\top \pi \bigr\rvert^2+(b\kappa)^2-2b\kappa \pi^\top\sigma\rho\right]+\lambda\Exp[U_\eta(1-\kappa Y)].
\]
For the case $\eta\neq 1$ the function $\theta_\eta$ is the solution to the final-value ODE
\[
\frac{1}{1-\eta}\theta_\eta'(t)+\theta_\eta(t)[r-\lambda+f(\hpi,\hat\kappa)+H(\hpi,\hat\kappa;\eta)]=0, \ \ \theta_1(T)=1
\]
where
\[
(\hpi,\hat\kappa)=\arg\max_{{\substack{\pi\in\R^d\\\kappa\in[0,1]}}}f(\pi,\kappa)+H(\pi,\kappa;\eta).
\]
In the log-utility case $\eta=1$ we have $\theta_1(t)=-[r-f(\hpi,\hat\kappa)+H(\hpi,\hat\kappa;1)]t.$ Since $f$ is not necessarily differentiable, we will employ the following  `convex conjugate'
\[
\tilde f(\zeta,\gamma):=\sup_{\substack{\pi\in\R^d\\\kappa\in[0,1]}}[f(\pi,\kappa)+\pi^\top\zeta+\kappa\gamma], \ \ \zeta\in\R^d, \ \gamma\in\R
\]
and the effective domain $\N:=\Set{\tilde{f}<+\infty}.$ We have the following result
\begin{lemma}\label{lemma-optcond}
Let $(\hpi,\hat\kappa)\in\R^d\times[0,1]$ be such that
\begin{equation}\label{optcond-zeta-pikappa}
f(\hpi,\hat\kappa)+\hat\pi^\top \nabla_\pi H({\hat\pi,\hat\kappa})+\hat\kappa\partial_\kappa H({\hat\pi,\hat\kappa})=\tilde f(\nabla_\pi H({\hat\pi,\hat\kappa}),\partial_\kappa H({\hat\pi,\hat\kappa}))
\end{equation}
where
\begin{equation}\label{grad_partialH}
\begin{split}
	\nabla_\pi H&=\mu-r\underline 1-\eta\sigma[\sigma^\top\pi-\rho b\kappa],\\	
	{\partial_\kappa H}&=\eta b[\pi^\top\sigma\rho -b\kappa]-\lambda\Exp\Bigl[\frac{Y}{(1-\kappa Y)^\eta}\Bigr].
\end{split}
\end{equation}
are the gradient and partial derivative of $H$ with respect to $\pi\in\R^d$ and $\kappa\in[0,1]$ respectively. Then the pair $(\hpi,\hat\kappa)$ maximizes $f+H.$
\end{lemma}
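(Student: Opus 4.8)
The plan is to exploit the concavity of the objective $f+H$ together with a Fenchel--Young (duality) argument, so that the assumed equality \eqref{optcond-zeta-pikappa} becomes exactly the first-order optimality condition for a concave maximization.

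First I would record that $f+H$ is concave on $\R^d\times[0,1]$. Since $f$ is concave by assumption, it suffices to check that $H(\cdot,\cdot;\eta)$ is concave. The term $\pi^\top(\mu-r\underline 1)$ is affine, and $\kappa\mapsto\lambda\,\Exp[U_\eta(1-\kappa Y)]$ is concave because $U_\eta$ is concave and $1-\kappa Y$ is affine in $\kappa$ (taking expectations preserves concavity). For the quadratic part, writing $u=\sigma^\top\pi$ and completing the square gives
\[
|\sigma^\top\pi|^2+(b\kappa)^2-2b\kappa\,\pi^\top\sigma\rho=|u-b\kappa\rho|^2+(b\kappa)^2(1-|\rho|^2),
\]
which is a nonnegative, hence convex, quadratic form precisely because the joint correlation structure forces $|\rho|\le 1$. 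Thus $-\tfrac{\eta}{2}[\cdots]$ is concave for every $\eta>0$, so $H$ is concave and $f+H$ is concave; a point is therefore a global maximizer if and only if it satisfies the appropriate subdifferential condition.

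Next I would read the hypothesis through the conjugate $\tilde f$. Set $\zeta^\ast=\nabla_\pi H(\hat\pi,\hat\kappa)$ and $\gamma^\ast=\partial_\kappa H(\hat\pi,\hat\kappa)$. By definition of $\tilde f$, one always has $f(\pi,\kappa)+\pi^\top\zeta^\ast+\kappa\gamma^\ast\le\tilde f(\zeta^\ast,\gamma^\ast)$ for all $(\pi,\kappa)$, and the assumed equality \eqref{optcond-zeta-pikappa} says that $(\hat\pi,\hat\kappa)$ attains this supremum. Rearranging yields the supergradient inequality
\[
f(\pi,\kappa)\le f(\hat\pi,\hat\kappa)-(\pi-\hat\pi)^\top\zeta^\ast-(\kappa-\hat\kappa)\gamma^\ast\qquad\text{for all }(\pi,\kappa),
\]
that is, $-(\zeta^\ast,\gamma^\ast)$ is a supergradient of $f$ at $(\hat\pi,\hat\kappa)$.

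Finally I would add the tangent-plane inequality for the concave, differentiable $H$, namely $H(\pi,\kappa)\le H(\hat\pi,\hat\kappa)+(\pi-\hat\pi)^\top\zeta^\ast+(\kappa-\hat\kappa)\gamma^\ast$. Summing the two displays, the linear terms cancel exactly and we obtain $f(\pi,\kappa)+H(\pi,\kappa)\le f(\hat\pi,\hat\kappa)+H(\hat\pi,\hat\kappa)$ for every $(\pi,\kappa)$, which is the claim. The point demanding most care is the sign bookkeeping: because $\tilde f$ is defined with a plus sign in $\pi^\top\zeta+\kappa\gamma$, the equality makes $-(\zeta^\ast,\gamma^\ast)$ (not $+(\zeta^\ast,\gamma^\ast)$) the supergradient of $f$, and it is exactly this sign that annihilates the gradient of $H$ when the two inequalities are added. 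A minor secondary check is that $(\zeta^\ast,\gamma^\ast)\in\N$ so that $\tilde f(\zeta^\ast,\gamma^\ast)<+\infty$, which is automatic since the left-hand side of \eqref{optcond-zeta-pikappa} is finite.
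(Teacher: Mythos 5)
Your proof is correct and takes essentially the same route as the paper's: both rest on the Fenchel--Young inequality built into the definition of $\tilde f$, the assumed equality (\ref{optcond-zeta-pikappa}), and the tangent-plane (supergradient) inequality for the concave function $H$ --- the paper chains these in a single string of inequalities while you add two separate inequalities, which is only a cosmetic reorganization. Your explicit check that $H$ is concave (completing the square and noting that the correlation structure forces $\abs{\rho}\le 1$) fills in a step the paper merely asserts, but it does not change the argument.
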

\begin{proof}
Let $(\pi,\kappa)\in\R^d\times[0,1]$ be fixed. Then,
\begin{align*}
	&f(\hpi,\hat\kappa)+H(\hpi,\hat\kappa;\eta)\\
	&=\tilde f(\nabla_\pi H({\hat\pi,\hat\kappa}),\partial_\kappa H({\hat\pi,\hat\kappa}))+H(\hpi,\hat\kappa;\eta)-\hat\pi^\top \nabla_\pi H({\hat\pi,\hat\kappa})-\hat\kappa\partial_\kappa H({\hat\pi,\hat\kappa})\\
	&\geq f(\pi,\kappa)+\pi^\top \nabla_\pi H({\hat\pi,\hat\kappa})+\kappa \partial_\kappa H({\hat\pi,\hat\kappa})+H(\hpi,\hat\kappa;\eta)-\hat\pi^\top \nabla_\pi H({\hat\pi,\hat\kappa})-\hat\kappa \partial_\kappa H({\hat\pi,\hat\kappa})\\
	&=f(\pi,\kappa)+H(\hpi,\hat\kappa;\eta)+(\pi-\hat\pi)^\top \nabla_\pi H({\hat\pi,\hat\kappa})+(\kappa-\hat\kappa){\partial_\kappa H}({\hat\pi,\hat\kappa})\\
	&\geq f(\pi,\kappa)+H(\pi,\kappa;\eta).
\end{align*}
The last inequality follows from the concavity of $H.$	
\end{proof}

\section{$f(\pi,\kappa)=g(\pi)-p(\kappa)$ with $p$ differentiable}

Let us consider first the case $f(\pi,\kappa)=g(\pi)-p(\kappa)$ with $g$ concave and $g(\underline 0)=0$, and $p$ a differentiable convex premium function on $[0,1]$ with $p(1)=0.$ 
For $\gamma\in\R,$ the first-order Karush–Kuhn–Tucker (KKT) optimality conditions for the maximization problem $\sup_{\kappa\in[0,1]}-p(\kappa)+\gamma \kappa$ are
\begin{align*}
-\gamma+p'(\kappa)+\phi_1-\phi_2&=0\\
\phi_1(\kappa-1)&=0\\
\phi_2\kappa&=0
\end{align*}
with Lagrange multipliers $\phi_1,\phi_2\geq 0.$ Then, we deduce that 
\[
\tilde f(\zeta,\gamma)=\tilde g(\zeta)+
\begin{cases}
-p((p')^{-1}(\gamma))+(p')^{-1}(\gamma), \ \ &\mbox{ if } \ \gamma\in\range p',\\
-p(0), \ \ &\mbox{ if } \ \gamma< p'(0),\\
\gamma, \ \ &\mbox{ if } \ \gamma> p'(1).
\end{cases}
\]
where
\[
\tilde g(\zeta):=\sup_{\pi\in\R^d}g(\pi)+\pi^\top\zeta, \ \zeta\in\R^d.
\]
A straightforward application of Lemma \ref{lemma-optcond} yields the following result, which provides a characterization of sufficient conditions for existence of optimal investment and insurance demand choices. Below, in the remainder of this section, we present some examples in which the result can be used to obtain explicit solutions.
\begin{theorem}[Characterization of optimal portfolio and insurance demand]\label{g-theorem}
\phantom{AA}

\begin{enumerate}
		\item Suppose there exists $(\hpi,\hat\kappa)$ satisfying the system of $d+1$  equations
		\begin{align}
			\tilde g(\nabla_\pi H(\hat\pi,\hat\kappa))&=g(\hat\pi)+\hat\pi^\top\nabla_\pi H(\hat\pi,\hat\kappa)\label{tilde-g-zeta}\\
			\partial_\kappa H(\hpi,\hat\kappa)&=p'(\hat\kappa)\label{gtilde-pprime-2}
		\end{align}
then $(\hpi,\hat\kappa)$ is optimal.

\item If there exists $\hpi$ satisfying
\begin{equation}\label{partialH-pizero}
\partial_\kappa H(\hpi,0)=\eta b\rho^\top\sigma^\top\hpi-\lambda \Exp[Y]\leq p'(0)\end{equation}
and (\ref{tilde-g-zeta}) with $\kappa=0,$ that is,
\[
\tilde g(\mu-r\underline 1-\eta\sigma\sigma^\top\hpi)=g(\hpi)+\hpi^\top(\mu-r\underline 1-\eta\sigma\sigma^\top\hpi)
\]
then $(\hpi,0)$ is optimal.

\item If there exists $\hpi$ satisfying
\begin{equation}\label{partialH-pione}
	\partial_\kappa H(\hpi,1)=\eta b[\rho^\top\sigma^{\top}\hpi -b] -\lambda\Exp\Bigl[\frac{Y}{(1-Y)^\eta}\Bigr]>p'(1)
\end{equation}
and (\ref{tilde-g-zeta}) with $\kappa=1,$ that is,
\[
\tilde g(\mu-r\underline 1-\eta\sigma[\sigma^\top\hpi-\rho b])=g(\hpi)+\pi^\top(\mu-r\underline 1-\eta\sigma[\sigma^\top\hpi-\rho b])
\]
then $(\hpi,1)$ is optimal.
\end{enumerate}
\end{theorem}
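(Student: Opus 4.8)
The plan is to apply Lemma \ref{lemma-optcond} in each of the three cases, so that everything reduces to verifying the single optimality identity (\ref{optcond-zeta-pikappa}) for the candidate pair at hand. First I would observe that (\ref{optcond-zeta-pikappa}) says precisely that $(\hat\pi,\hat\kappa)$ attains the supremum defining the conjugate $\tilde f$ at the point $(\zeta,\gamma):=(\nabla_\pi H(\hat\pi,\hat\kappa),\partial_\kappa H(\hat\pi,\hat\kappa))$: by definition $\tilde f(\zeta,\gamma)\ge f(\hat\pi,\hat\kappa)+\hat\pi^\top\zeta+\hat\kappa\gamma$ always holds, and (\ref{optcond-zeta-pikappa}) is exactly the reverse inequality. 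Because $f(\pi,\kappa)=g(\pi)-p(\kappa)$ is separable and the coupling between the variables enters only through the fixed evaluation point $(\zeta,\gamma)$, the objective $f(\pi,\kappa)+\pi^\top\zeta+\kappa\gamma$ splits as $[g(\pi)+\pi^\top\zeta]+[-p(\kappa)+\kappa\gamma]$. Hence $(\hat\pi,\hat\kappa)$ attains the joint supremum if and only if $\hat\pi$ maximizes $\pi\mapsto g(\pi)+\pi^\top\zeta$ over $\R^d$ and $\hat\kappa$ maximizes $\kappa\mapsto-p(\kappa)+\kappa\gamma$ over $[0,1]$.

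The $\pi$-part is immediate and identical in all three parts: $\hat\pi$ maximizes $g(\pi)+\pi^\top\zeta$ exactly when $g(\hat\pi)+\hat\pi^\top\zeta=\tilde g(\zeta)$, which is condition (\ref{tilde-g-zeta}). It then remains to characterize when $\hat\kappa$ maximizes the scalar concave function $\psi(\kappa):=-p(\kappa)+\kappa\gamma$ on $[0,1]$; since $p$ is convex and differentiable, $\psi$ is concave with $\psi'(\kappa)=\gamma-p'(\kappa)$, so this is a one-dimensional first-order analysis. For Part 1, condition (\ref{gtilde-pprime-2}) reads $p'(\hat\kappa)=\gamma$, i.e. $\psi'(\hat\kappa)=0$; by concavity this stationarity makes $\hat\kappa$ a global maximizer of $\psi$ on $[0,1]$, so together with (\ref{tilde-g-zeta}) it yields (\ref{optcond-zeta-pikappa}) and optimality follows from Lemma \ref{lemma-optcond}.

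For the boundary cases I would invoke the sign of the relevant one-sided derivative of $\psi$ at the endpoint (equivalently, the KKT conditions for $\max_{\kappa\in[0,1]}\psi$). In Part 2 the candidate is $\hat\kappa=0$, which is a global maximizer of $\psi$ precisely when $\psi'(0)\le0$, that is $\gamma=\partial_\kappa H(\hat\pi,0)\le p'(0)$; substituting the explicit expression for $\partial_\kappa H$ from (\ref{grad_partialH}) at $\kappa=0$ turns this into exactly hypothesis (\ref{partialH-pizero}). In Part 3 the candidate is $\hat\kappa=1$, which is a global maximizer precisely when $\psi'(1)\ge0$, that is $\gamma=\partial_\kappa H(\hat\pi,1)\ge p'(1)$, a condition supplied a fortiori by the strict inequality in hypothesis (\ref{partialH-pione}). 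In each case I then combine this $\kappa$-optimality with the $\pi$-equality (\ref{tilde-g-zeta}) to recover the joint identity (\ref{optcond-zeta-pikappa}) and conclude via Lemma \ref{lemma-optcond}.

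The only genuine subtlety, and the point I would state carefully, is the clean separation of the joint conjugate into independent $\pi$- and $\kappa$-problems together with the claim that the joint maximum is attained if and only if both marginal maxima are. This rests entirely on the separable form $f=g-p$ and on the fact that $(\zeta,\gamma)$ is frozen at $(\hat\pi,\hat\kappa)$ inside the conjugate; once that is recognized the argument collapses to elementary one-variable convex analysis. I do not expect to need the explicit case-split formula for $\tilde f$ displayed above: working directly from Lemma \ref{lemma-optcond} and separability is shorter and makes the correspondence between the three hypotheses and the interior/left-boundary/right-boundary regimes of $\psi$ fully transparent.
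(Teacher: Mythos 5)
Your proof is correct and follows essentially the same route as the paper: the paper also reduces everything to Lemma \ref{lemma-optcond} by exploiting the separable form $f=g-p$, with the one-dimensional KKT analysis of $\sup_{\kappa\in[0,1]}[-p(\kappa)+\gamma\kappa]$ (which is exactly your concave scalar function $\psi$) yielding the three regimes $\gamma=p'(\hat\kappa)$, $\gamma\le p'(0)$, $\gamma>p'(1)$. The only cosmetic difference is that the paper records this analysis as an explicit case-split formula for $\tilde f(\zeta,\gamma)$ before stating the theorem, whereas you verify attainment of the two marginal suprema directly, which is the same argument.
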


\begin{example}
Let $p(\kappa)=q(1-\kappa)^\delta$ with $q>0$ and $\delta\geq 1.$ Then condition
(\ref{partialH-pizero}) in case 2 reads $\lambda \Exp[Y]-\eta b\rho^\top\sigma^\top\hpi\geq \delta q$. In particular, for sufficiently high values of the (actuarially fair) reservation premium $\lambda\Exp[Y],$ the agent insures completely against the exogenous shocks in the return rate of the wealth process. In the linear case $(\delta=1)$ condition
(\ref{partialH-pione}) in case 3 reads
\[
\lambda\Exp\Bigl[\frac{Y}{(1-Y)^\eta}\Bigr]-\eta b[\rho^\top\sigma^{\top}\hpi -b] <q
\]
so the agent assumes completely the risk of negative jumps if the return rate for premium rate $q$ is sufficiently high.
\end{example}

\subsection{$g$ differentiable}
Suppose $d=1$, $g\in\mathcal C^2(\R)$ and $g''<0$ so that $g$ is strictly concave. Using first and second order optimality conditions, we obtain
\[
\tilde g(\zeta)=g((g')^{-1}(-\zeta))+\zeta (g')^{-1}(-\zeta), \ \ \zeta\in\mathrm{Range } (-g').
\]
and $\mathcal N=\mathrm{Range } (-g')\times\R,$ so the sufficient condition (\ref{tilde-g-zeta}) for optimality of $(\pi,\kappa)$ reads $\pi=(g')^{-1}(-\mu+r+\eta\sigma[\sigma\pi-\rho b\kappa]),$ that is, 
\[
\mu-r+\eta\sigma\rho b \kappa=\eta\sigma^2\pi-g'(\pi).
\]
By writing the right side as $Q(\pi):=\eta\sigma^2\pi -g'(\pi),$ optimality condition (\ref{tilde-g-zeta}) can be rewritten as 
\begin{equation}\label{eq-pikappa-fprime}
	Q(\hpi)=\mu-r+\eta\sigma\rho b \hat\kappa.
\end{equation}
Notice $Q$ is strictly increasing since $Q'=\eta\sigma^2-g''>0.$ Then we have the following result
\begin{corollary}
\begin{enumerate}
    \item If there exists $\hat\kappa\in[0,1]$ such that $\mu-r+\eta\sigma\rho b \hat\kappa\in {\rm Range}(Q)$ and 
	\begin{equation}\label{optcond-Qpprime}
	\eta b[\sigma\rho Q^{-1}(\mu-r+\eta\sigma\rho b\hat\kappa)-b \hat\kappa]-\lambda\Exp\Bigl[\frac{Y}{(1-\hat\kappa Y)^\eta}\Bigr]-p'(\hat\kappa)=0		
	\end{equation}
	then $(\hpi,\hat\kappa)$ is optimal with $\hpi:=Q^{-1}(\mu-r+\eta\sigma\rho b\hat\kappa).$
	
\item Suppose that $\rho>0$ holds if and only if $\mu<r+Q\bigl(\frac{1}{\eta \sigma b\rho}[\lambda\Exp(Y)+p'(0)]\bigr),$ and $\mu-r\in {\rm Range}(Q).$ Then the pair $(\hpi,\hat\kappa)=(Q^{-1}(\mu-r),0)$ is optimal.

\item Suppose that $\rho>0$ holds if and only if
\[
\mu-r+\eta\rho b\sigma>Q\Bigl(\frac{1}{\eta \rho b\sigma}\Bigl\{\lambda\Exp\Bigl[\frac{Y}{(1-Y)^\eta}\Bigr]+\eta b^2+p'(1)\Bigr\}\Bigr)
\]
and $\mu-r+\eta\rho b\sigma\in{\rm Range}(Q),$ then  $(\hpi,\hat\kappa)=(Q^{-1}(\mu-r+\eta\rho b\sigma),1)$ is optimal.
\end{enumerate}
\end{corollary}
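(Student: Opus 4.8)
The plan is to derive all three parts directly from Theorem~\ref{g-theorem}, specializing its hypotheses to the present setting $d=1$, $g\in\mathcal C^2(\R)$, $g''<0$. The decisive simplification has already been recorded above: for strictly concave differentiable $g$, condition (\ref{tilde-g-zeta}) is equivalent to the first-order relation $g'(\hpi)=-\nabla_\pi H(\hpi,\hat\kappa)$, which after inserting the one-dimensional gradient from (\ref{grad_partialH}) becomes exactly (\ref{eq-pikappa-fprime}), namely $Q(\hpi)=\mu-r+\eta\sigma\rho b\hat\kappa$ with $Q(\pi)=\eta\sigma^2\pi-g'(\pi)$. Since $Q'=\eta\sigma^2-g''>0$, the map $Q$ is a strictly increasing bijection onto $\mathrm{Range}(Q)$, so whenever the right-hand side lies in $\mathrm{Range}(Q)$ this equation has the unique solution $\hpi=Q^{-1}(\mu-r+\eta\sigma\rho b\hat\kappa)$. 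This identity is the engine for every part.

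For Part~1 I would fix the $\hat\kappa\in[0,1]$ provided by the hypothesis. The assumption $\mu-r+\eta\sigma\rho b\hat\kappa\in\mathrm{Range}(Q)$ lets me define $\hpi=Q^{-1}(\mu-r+\eta\sigma\rho b\hat\kappa)$, and by the equivalence just described this pair satisfies (\ref{tilde-g-zeta}). It then remains to check the second optimality condition (\ref{gtilde-pprime-2}), $\partial_\kappa H(\hpi,\hat\kappa)=p'(\hat\kappa)$; substituting the one-dimensional expression for $\partial_\kappa H$ from (\ref{grad_partialH}) together with the value of $\hpi$ turns this equality into precisely (\ref{optcond-Qpprime}). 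Both conditions of Theorem~\ref{g-theorem}(1) then hold, so $(\hpi,\hat\kappa)$ is optimal.

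Parts~2 and~3 follow the same template with $\hat\kappa=0$ and $\hat\kappa=1$, invoking Theorem~\ref{g-theorem}(2) and (3) respectively. In Part~2, (\ref{tilde-g-zeta}) with $\kappa=0$ forces $\hpi=Q^{-1}(\mu-r)$, and the remaining requirement (\ref{partialH-pizero}) reads $\eta\sigma b\rho\,Q^{-1}(\mu-r)\le\lambda\Exp[Y]+p'(0)$. I would divide this inequality by $\eta\sigma b\rho$ and apply the increasing function $Q$; the point is that the sign of $\rho$ (with $b,\sigma>0$) dictates whether the inequality is preserved or reversed, and the biconditional in the hypothesis is written precisely so that, for either sign of $\rho$, it is equivalent to (\ref{partialH-pizero}). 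Part~3 is identical in spirit: (\ref{tilde-g-zeta}) with $\kappa=1$ gives $\hpi=Q^{-1}(\mu-r+\eta\rho b\sigma)$, and condition (\ref{partialH-pione}) rearranges to $\eta\sigma b\rho\,Q^{-1}(\mu-r+\eta\rho b\sigma)>\lambda\Exp[Y/(1-Y)^\eta]+\eta b^2+p'(1)$, which the stated biconditional again encodes after dividing by $\eta\sigma b\rho$ and applying $Q$.

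The routine algebra (matching $\partial_\kappa H$ against $p'$, and the substitutions) is immediate; the only real care needed---and the step I would treat as the main obstacle---is the sign bookkeeping in Parts~2 and~3. Because one divides by $\eta\sigma b\rho$, the direction of the inequality flips when $\rho<0$, and one must verify that the ``if and only if'' phrasing of the hypotheses faithfully reproduces (\ref{partialH-pizero}) and (\ref{partialH-pione}) in both sign regimes, the strict-versus-nonstrict distinction at the boundary being the only delicate point. Once this is checked, the respective part of Theorem~\ref{g-theorem} applies verbatim.
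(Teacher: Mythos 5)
Your proposal is correct and takes essentially the same approach as the paper: reduce condition (\ref{tilde-g-zeta}) to the relation $Q(\hpi)=\mu-r+\eta\sigma\rho b\hat\kappa$, then insert $\hpi=Q^{-1}(\mu-r+\eta\sigma\rho b\hat\kappa)$ with general $\hat\kappa$, $\hat\kappa=0$ and $\hat\kappa=1$ into (\ref{gtilde-pprime-2}), (\ref{partialH-pizero}) and (\ref{partialH-pione}) respectively, and invoke the corresponding parts of Theorem \ref{g-theorem}. The paper's proof is simply a terser version of yours; the sign bookkeeping for $\rho$ that you flag as the delicate step is left implicit there.
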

\begin{proof}
Inserting $\pi=Q^{-1}(\mu-r+\eta\sigma\rho b\kappa)$ into (\ref{gtilde-pprime-2})  yields (\ref{optcond-Qpprime}), and part \emph{1} follows. Parts \emph{2} and \emph{3} follow similarly from inserting $\pi=Q^{-1}(\mu-r+\eta\sigma\rho b\kappa)$ with $\kappa=0$ and $\kappa=1$ into (\ref{partialH-pizero}) and (\ref{partialH-pione}) respectively.
\end{proof}

\begin{remark}\label{exist-kappah}
Note that $g''\circ Q^{-1}<0<\eta\sigma^2(1-\rho^2).$ Then $\eta(\sigma\rho)^2<\eta\sigma^2-g''\circ Q^{-1}$ and
\[
1-\eta(\sigma\rho)^2(Q^{-1})'=1-\frac{\eta(\sigma\rho)^2}{\eta\sigma^2-g''\circ Q^{-1}}>0.
\]
It follows that
\begin{equation}\label{h-kappa}
h(\kappa):=\eta b[\sigma\rho Q^{-1}(\mu-r+\eta\sigma\rho b\kappa)-b \kappa]-\lambda\Exp\Bigl[\frac{Y}{(1-\kappa Y)^\eta}\Bigr]-p'(\kappa)	
\end{equation}
is strictly decreasing since
\[
h'(\kappa)=-\eta\left\{b^2[1-\eta(\sigma\rho)^2(Q^{-1})'(\mu-r+\eta\sigma\rho b\kappa)]+\lambda\Exp\Bigl[\frac{Y^2}{(1-{{\kappa}}Y)^{1+\eta}}\Bigr]\right\}-p''(\kappa)
\]
is strictly negative. Hence, there exists an unique optimal $\hat\kappa$ for case 1 whenever $h(0)>0>h(1)$ i.e. zero is between the values
\begin{align*}
&\eta b[\sigma\rho Q^{-1}(\mu-r+\eta\sigma\rho b)-b ]-\lambda\Exp\Bigl[\frac{Y}{(1- Y)^\eta}\Bigr]-p'(1)\\
&\mbox{ and } \ \ \eta b\sigma\rho Q^{-1}(\mu-r)-\lambda\Exp[Y]-p'(0).    
\end{align*}
\end{remark}
Deterioration in background wealth may encompass more complicated changes in the model parameters or distribution of the negative jumps. The previous observation can be used to study comparative static properties for optimal pairs $(\hpi,\hat\kappa)$ with respect to the model parameters and first-order stochastic dominance deterioration of insurable risk. Recall that a random variable $Y$ has first-order stochastic dominance over random variable $\tilde Y$ if the cumulative distribution functions satisfy $F_{Y}\le F_{\tilde Y}.$
\begin{corollary}\label{FOSD}
$\hat\kappa$ decreases with the marginal cost $p',$ jump arrival rate $\lambda$  and   first-order stochastic dominance of  the jump size $Y.$  Both $\hpi$ and $\hat\kappa$ increase (resp. decrease) with the premium risk $\mu-r$ if $\rho>0$ (resp. $<0$). 
\end{corollary}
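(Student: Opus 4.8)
The plan is to reduce everything to the scalar equation of Remark~\ref{exist-kappah}. At an interior optimum the retained fraction $\hat\kappa$ is the unique zero in $[0,1]$ of the function $h$ of \eqref{h-kappa}, which is strictly decreasing, and the optimal weight is recovered from $\hpi=Q^{-1}(\mu-r+\eta\sigma\rho b\hat\kappa)$ with $Q^{-1}$ strictly increasing. Each statement about $\hat\kappa$ is therefore a one-parameter comparative static on the root of $h$: writing $h(\kappa;\theta)$ for a generic parameter $\theta$, the implicit function theorem gives $d\hat\kappa/d\theta=-(\partial_\theta h)/(\partial_\kappa h)$, and since $\partial_\kappa h<0$ the sign of $d\hat\kappa/d\theta$ equals that of $\partial_\theta h$ evaluated at $\hat\kappa$. (The corner cases $\hat\kappa\in\{0,1\}$ follow from the same monotone comparison: shifting $h(\cdot;\theta)$ down pointwise cannot raise its clamped root.) So I only need to record how each parameter moves $h$.

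The marginal cost enters $h$ only through $-p'(\kappa)$, so raising $p'$ pointwise lowers $h$ and hence $\hat\kappa$. The arrival rate enters through $-\lambda\,\Exp[Y/(1-\kappa Y)^\eta]$, and since $0\le Y<1$ this expectation is nonnegative, so $\partial_\lambda h\le 0$ and $\hat\kappa$ decreases in $\lambda$. For the stochastic-dominance part the one computation that matters is that $\psi(y):=y(1-\kappa y)^{-\eta}$ is increasing on $[0,1)$: indeed $\psi'(y)=(1-\kappa y)^{-\eta-1}[1+(\eta-1)\kappa y]$ and $1+(\eta-1)\kappa y>0$ whenever $\kappa y\in[0,1)$ and $\eta>0$. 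Since the integrand is increasing, the defining property of first-order stochastic dominance shows that a dominance increase of $Y$ raises $\Exp[\psi(Y)]$, hence lowers $h$ and lowers $\hat\kappa$. Finally, the premium risk $\mu-r$ enters $h$ only inside $Q^{-1}$, so $\partial_{\mu-r}h=\eta\sigma\rho b\,(Q^{-1})'$; with $(Q^{-1})'>0$ and the natural conventions $\sigma,b>0$ this has the sign of $\rho$, so $\hat\kappa$ increases with $\mu-r$ when $\rho>0$ and decreases when $\rho<0$.

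The monotonicity of $\hpi$ is where I expect the real work, and the main obstacle. The weight $\hpi$ feels $\mu-r$ both directly and indirectly through $\hat\kappa$. Differentiating $Q(\hpi)=\mu-r+\eta\sigma\rho b\hat\kappa$ together with \eqref{optcond-Qpprime} in $m:=\mu-r$ gives $Q'(\hpi)\,\hpi_m=1+\eta\sigma\rho b\,\hat\kappa_m$ and $\eta\sigma\rho b\,\hpi_m=B\,\hat\kappa_m$, where $B:=\eta b^2+\lambda\,\Exp[\eta Y^2(1-\hat\kappa Y)^{-1-\eta}]+p''(\hat\kappa)>0$. Eliminating $\hat\kappa_m$ yields $\hpi_m=1/(Q'(\hpi)-(\eta\sigma\rho b)^2/B)$, and the bound $(\eta\sigma\rho b)^2/B\le\eta\sigma^2\rho^2\le\eta\sigma^2<Q'$, coming from $g''<0$, shows $\hpi_m>0$. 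This is the delicate point: the computation makes $\hpi$ increase in $\mu-r$ for either sign of $\rho$, the direct channel always dominating the indirect one. I would therefore recheck the bundled statement against the authors' sign conventions, since the $\rho$-dependent direction appears to be genuinely a property of $\hat\kappa$ alone; the part of the claim I am fully confident in proving along these lines is the monotonicity of $\hat\kappa$ in all four parameters.
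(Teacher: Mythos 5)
Your strategy coincides with the paper's: both treat $\hat\kappa$ as the root of the strictly decreasing function $h$ of \eqref{h-kappa} (equivalently, of the optimality condition \eqref{optcond-Qpprime}) and read off each comparative static from the sign of the parameter-derivative of $h$, with the stochastic-dominance step handled exactly as in the paper via monotonicity of $\psi(y)=y(1-\kappa y)^{-\eta}$ on $(0,1)$. Your treatment of the $\hat\kappa$ claims is correct and, if anything, more careful than the paper's: you verify $\psi'>0$ for every $\eta>0$ rather than citing it, and you handle the clamped/corner roots, which the paper's proof does not mention.

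Your suspicion about the $\hpi$ claim is justified, and this is the substantive point of comparison. The paper's entire argument for the last sentence of the corollary is that the assertions ``follow easily from \eqref{optcond-Qpprime} and the increasing monotonic behavior of $Q^{-1}$.'' For $\rho>0$ this works: $m:=\mu-r$ and $\hat\kappa$ both increase, so $\hpi=Q^{-1}(m+\eta\sigma\rho b\hat\kappa)$ increases. But for $\rho<0$ the same chain rule gives
\[
\frac{d\hpi}{dm}=(Q^{-1})'\Bigl[1+\eta\sigma\rho b\,\frac{d\hat\kappa}{dm}\Bigr],
\]
and since $d\hat\kappa/dm$ has the sign of $\rho$, the indirect term $\eta\sigma\rho b\,(d\hat\kappa/dm)$ is nonnegative for \emph{either} sign of $\rho$: it reinforces the direct effect rather than opposing it. Your elimination argument, $\hpi_m=1/\bigl[Q'(\hpi)-(\eta\sigma\rho b)^2/B\bigr]$ together with the bound $(\eta\sigma\rho b)^2/B\le\eta\sigma^2\rho^2<Q'(\hpi)$ coming from $g''<0$, makes this quantitative and shows that $\hpi$ strictly increases in $\mu-r$ regardless of the sign of $\rho$. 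So the claim that $\hpi$ decreases in $\mu-r$ when $\rho<0$ is not supported by the paper's own one-line proof and is in fact false in this model; it appears to stem from a sign slip (when $\rho<0$, a decrease in $\hat\kappa$ \emph{raises}, not lowers, the argument $m+\eta\sigma\rho b\hat\kappa$). The genuinely $\rho$-dependent directions are those of $\hat\kappa$ itself, and of the response of $\hpi$ to $\lambda$, $p'$ and first-order stochastic dominance shifts, which act on $\hpi$ only through $\hat\kappa$ (this is what the paper correctly states in the paragraph following the corollary) --- exactly as you concluded.
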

\begin{proof}
If $Y$ dominates $\tilde Y$ in the sense of first-order stochastic dominance, then
\[
\Exp\Bigl[\frac{Y}{(1-\kappa Y)^\eta}\Bigr]\geq \Exp\Big[\frac{\tilde Y}{(1-\kappa \tilde Y)^\eta}\Big]
\]
since the function $\psi(y)=y(1-\kappa y)^{-\eta}$ is increasing in $y\in (0,1),$
see e.g.  \cite[Ch. 2]{eeckhoudt2011economic}. The assertions follow easily from (\ref{optcond-Qpprime}) and the increasing monotonic behavior of $Q^{-1}.$
\end{proof}
An increase in the first-order stochastic dominance of  the jump size $Y,$ the magnitude of the jump arrival rate $\lambda$ or the marginal cost $p',$ is sufficient to guarantee a decrease in the fraction of insurable risk assumed by the agent. The relation $\hpi=Q^{-1}(\mu-r+\eta\sigma\rho b\hat\kappa)$ shows that such an increase also results in a decrease (resp. increase) of the risky asset allocation if the correlation between the exogenous adverse shocks and financial log-returns is negative (resp. positive). In particular, it is beneficial to invest more in the risky asset if the correlation is negative, as it serves as a hedging strategy against adverse shocks. 

In particular, in the case of no-frictions ($g=0$) it is easy to see that $\hat\kappa$ decreases with $\eta,$ and so does $\hpi$ whenever $\rho>0,$ since
\[
\frac{\partial h}{\partial \eta}=\frac{-\frac{\partial h}{\partial \eta}}{\frac{\partial h}{\partial \kappa}}<0.
\]
A similar argument yields that $\hat\kappa$ increases with correlation $\rho$ if it is larger than $-\frac{\mu-r}{2\eta b\sigma}.$ Note this inequality holds for all $\rho\in[-1,1]$ if $\mu<r+2b\eta \sigma.$ That is, if the mean rate of return 
is sufficiently low, then the insured fraction of jump risk decreases with the correlation.
\subsection{$g$ piece-wise linear}
\subsubsection{Differential rates for borrowing and lending}\label{exdiffrates-2}
We now consider the margin payment function for the case of an interest rate that is higher for borrowing than for investing
\[
g(\pi):=-(R-r)(\pi^\top\underline{1}-1)^+
\]
This (piece-wise linear) margin payment function is not differentiable, yet we may characterize $\tilde g$ rather easily. Indeed, the map
\[
\R\ni\pi\mapsto g(\pi)+\pi^\top\zeta=
\left\{
\begin{array}{ll}
	\pi^\top\zeta, & \ \pi^\top\underline 1\le 1,\\\\
	\pi^\top[\zeta-(R-r)\underline 1]+(R-r), &  \ \pi^\top\underline 1> 1,
\end{array}
\right.
\]
attains a finite maximum value if and only if $0\le\zeta^1=\zeta^2=\cdots=\zeta^d\le R-r,$ see Section 6.8 of \cite{karatzas1998methods}. Then, the effective domain of $\tilde f$ is
\[
\mathcal N=\Set{(\xi-r)\underline 1\in\R^d:\xi\in[r,R]}\times\R
\]
and we have $\tilde g((\xi-r)\underline 1)=\xi-r.$ Solutions to equation (\ref{tilde-g-zeta}) can be singled out as follows
\begin{itemize}
	\item $\pi^\top \underline 1<1$ and $\xi=r,$ i.e. $\eta\sigma(\sigma^\top\pi-\rho b\kappa)=\mu-r\underline 1$
	
	\item $\pi^\top \underline 1>1$ and $\xi=R,$ i.e. $\eta\sigma(\sigma^\top\pi-\rho b\kappa)=\mu-R\underline 1$
	
	\item $\pi^\top \underline 1=1$ and $\eta\sigma(\sigma^\top\pi-\rho b\kappa)=\mu-\xi\underline 1$ for some $\xi\in[r,R].$
\end{itemize}
Using this in conjunction with Thoerem \ref{g-theorem} we get the following result, which generalizes Section 5.2 of \cite{cuocoliu}\footnote{Other similar piece-wise linear margin payment functions related to funding costs (e.g., margin requirements for leveraged positions, offsetting of positions in risky assets, short-sale constraints with negative rebate rates, etc.) can be considered, see e.g. \cite{cuocoliu}, \cite{bielecki2015valuation}.} to the case in which the agent can insure against the exogenous shocks. In what follows, we assume $\sigma$ is invertible, so that $(\sigma\sigma^\top)^{-1}=(\sigma^\top)^{-1}\sigma^{-1}$ and $\sigma^\top(\sigma\sigma^\top)^{-1}=\sigma^{-1}.$
\begin{corollary}\label{cor-diff-rates}
	Suppose for each $\xi\in[r,R]$ there exists $\kappa(\xi)\in[0,1]$ solution to $h(\kappa;\xi)=\underline 0$ with
\[
h(\kappa;\xi):=b\rho^\top\sigma^{-1}({\mu}-\xi\underline 1)-\eta b^2(1-\abs{\rho}^2)\kappa-\lambda\Exp\Bigl[\frac{Y}{(1-\kappa Y)^\eta}\Bigr]-p'(\kappa)
\]	
	and set
	\[
	\pi(\xi):=\frac{1}{\eta}(\sigma\sigma^\top)^{-1}({\mu}-\xi\underline 1)+(\sigma^\top)^{-1}\rho b\kappa(\xi).
	\]
	We have the following
	\begin{enumerate}
		\item[i.] If $\pi(r)^\top\underline 1<1,$ then $\hat\pi=\pi(r)$ and $\hat\kappa=\kappa(r)$ are optimal. In this case, the agent invests only own funds in the financial assets. 
		
		\item[ii.] If $\pi(R)^\top\underline 1> 1,$ then $\hat\pi:=\pi(R)$ and $\hat\kappa=\kappa(R)$ are optimal. In this case, the agent leverages the position in risky assets by borrowing from the risk-free money market account. 
		
		\item[iii.] If there exists $\xi^*\in[r,R]$ such that
		\begin{equation}\label{pi-xi}
			\pi(\xi^*)^\top \underline 1=1,
		\end{equation}
		then $\hat\pi=\pi(\xi^*)$ and $\hat\kappa=\kappa(\xi^*)$ are optimal. In this case, the agent holds a portfolio consisting of only risky assets.
		
		\item[iv.] Suppose $\lambda \Exp[Y]+p'(0)\geq b\rho^\top\sigma^{-1}(\mu-r\underline 1)$ and $\underline 1^\top(\sigma\sigma^\top)^{-1}(\mu-r\underline 1)\le \eta.$ Then the pair $\hpi=\frac{1}{\eta}(\sigma\sigma^\top)^{-1}(\mu-r\underline 1),$ $\hat\kappa=0$ is optimal. In particular, the agent insures totally against the adverse shocks and invests a positive amount in the risk-free money market account.  
		
		\item[v.] Suppose $\lambda \Exp[Y]+p'(0)\geq b\rho^\top\sigma^{-1}(\mu-R\underline 1)$ and $\underline 1^\top(\sigma\sigma^\top)^{-1}(\mu-R\underline 1)> \eta.$ Then the pair $\hpi=\frac{1}{\eta}(\sigma\sigma^\top)^{-1}(\mu-R\underline 1),$ $\hat\kappa=0$ is optimal. In particular, the agent insures totally against the adverse shocks and leverages the portfolio of risky assets by borrowing from the risk-free money market account.  
		
		\item[vi.] Suppose $\lambda\Exp\Bigl[\frac{Y}{(1-Y)^\eta}\Bigr]-b\rho^\top\sigma^{-1}(\mu-r\underline 1)+\eta b^2(1-\abs{\rho}^2)+p'(1)<0$ and
		\[
		\underline 1^\top\Bigl[\frac{1}{\eta}(\sigma\sigma^\top)^{-1}(\mu-r\underline 1)+b(\sigma^\top)^{-1}\rho\Bigr]\le 1.
		\]
		Then the pair $\hpi=\frac{1}{\eta}(\sigma\sigma^\top)^{-1}(\mu-r\underline 1)+b(\sigma^\top)^{-1}\rho,$ $\hat\kappa=1$ is optimal. In this case the agent assumes totally the insurable background risk and invests a positive amount in the risk-free money market account.  
		
		\item[vii.] Suppose $\lambda\Exp\Bigl[\frac{Y}{(1-Y)^\eta}\Bigr]-b\rho^\top\sigma^{-1}(\mu-R\underline 1)+\eta b^2(1-\abs{\rho}^2)+p'(1)<0$ and
		\[
		\underline 1^\top\Bigl[\frac{1}{\eta}(\sigma\sigma^\top)^{-1}(\mu-R\underline 1)+b(\sigma^\top)^{-1}\rho\Bigr]> 1.
		\]
		Then the pair $\hpi=\frac{1}{\eta}(\sigma\sigma^\top)^{-1}(\mu-R\underline 1)+b(\sigma^\top)^{-1}\rho,$ $\hat\kappa=1$ is optimal. In this case the agent assumes totally the insurable background risk and leverages the investment in risky assets by borrowing from the risk-free money market account.  
		
	\end{enumerate}
\end{corollary}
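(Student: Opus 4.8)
The plan is to verify that each candidate pair listed in i.--vii.\ satisfies the sufficient optimality conditions furnished by Theorem \ref{g-theorem}, exploiting the explicit description of the conjugate $\tilde g$ and its effective domain $\mathcal N$ obtained above for the piece-wise linear margin payment function $g(\pi)=-(R-r)(\pi^\top\underline 1-1)^+$. The key structural fact is that a vector $\zeta\in\R^d$ belongs to the effective domain of $\tilde g$ exactly when $\zeta=(\xi-r)\underline 1$ for some $\xi\in[r,R]$, and then $\tilde g((\xi-r)\underline 1)=\xi-r$. Since $\nabla_\pi H=\mu-r\underline 1-\eta\sigma(\sigma^\top\pi-\rho b\kappa)$, the requirement that $\nabla_\pi H(\hat\pi,\hat\kappa)$ lie in $\mathcal N$ is equivalent to the linear relation $\eta\sigma(\sigma^\top\hat\pi-\rho b\hat\kappa)=\mu-\xi\underline 1$; solving this for $\hat\pi$ with the help of $(\sigma\sigma^\top)^{-1}=(\sigma^\top)^{-1}\sigma^{-1}$ yields precisely the stated formula for $\pi(\xi)$.

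Next I would insert $\hat\pi=\pi(\xi)$ into the second optimality equation (\ref{gtilde-pprime-2}), namely $\partial_\kappa H(\hat\pi,\hat\kappa)=p'(\hat\kappa)$, using the expression for $\partial_\kappa H$ in (\ref{grad_partialH}). Computing $\pi(\xi)^\top\sigma\rho$ and simplifying with $\rho^\top\sigma^{-1}\sigma\rho=\abs{\rho}^2$ collapses this equation to $h(\kappa;\xi)=0$, which by hypothesis is solved by $\kappa(\xi)\in[0,1]$. Together with the gradient relation from the first paragraph, this shows that $(\pi(\xi),\kappa(\xi))$ satisfies both (\ref{tilde-g-zeta}) and (\ref{gtilde-pprime-2}), so Theorem \ref{g-theorem} part 1 applies \emph{provided} the gradient $(\xi-r)\underline 1$ is genuinely attained as the supremum defining $\tilde g$ at the point $\pi(\xi)$.

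The remaining step is to reconcile the three admissibility subcases recorded for $\tilde g$ --- $\pi^\top\underline 1<1$ with $\xi=r$, $\pi^\top\underline 1>1$ with $\xi=R$, and $\pi^\top\underline 1=1$ with $\xi\in[r,R]$ --- with conclusions i.--iii. In case i.\ the inequality $\pi(r)^\top\underline 1<1$ places the candidate on the affine piece where $g\equiv 0$, so that $\nabla_\pi H=\underline 0$ and (\ref{tilde-g-zeta}) reduces to $\tilde g(\underline 0)=0=g(\pi(r))$; cases ii.\ and iii.\ are handled the same way, checking that $\pi(R)^\top\underline 1>1$ and $\pi(\xi^*)^\top\underline 1=1$ place the candidate on the pieces where the supremum defining $\tilde g$ is attained. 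Cases iv.--vii.\ instead invoke parts 2 and 3 of Theorem \ref{g-theorem}, corresponding to the boundary regimes $\hat\kappa=0$ and $\hat\kappa=1$: setting $\kappa=0$ or $\kappa=1$ in $\pi(\xi)$ with $\xi=r$ or $\xi=R$ reproduces the stated $\hat\pi$, the requirement that $\hat\pi^\top\underline 1$ lie below or above $1$ translates into the stated inequality comparing $\underline 1^\top(\sigma\sigma^\top)^{-1}(\mu-\xi\underline 1)$ with $\eta$, and evaluating $\partial_\kappa H$ at the boundary turns (\ref{partialH-pizero}) and (\ref{partialH-pione}) into the displayed threshold inequalities.

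I expect the main difficulty to be bookkeeping rather than conceptual. The delicate point is to confirm in each case that the candidate $\pi(\xi)$ lands on the correct linear piece of $g$, so that the identity $\tilde g(\nabla_\pi H)=g(\hat\pi)+\hat\pi^\top\nabla_\pi H$ holds with equality rather than the mere inequality $\geq$ that always holds by definition of the conjugate; this is what distinguishes the interior regime (cases i.--iii., where $\hat\kappa$ is an interior root of $h(\cdot\,;\xi)$) from the two boundary regimes (cases iv.--vii., where the constraint $\kappa\in[0,1]$ is active and the corresponding inequality in (\ref{partialH-pizero})--(\ref{partialH-pione}) must be checked). The non-differentiability of $g$ at $\pi^\top\underline 1=1$ never needs to be confronted directly, since it is absorbed once and for all into the conjugate $\tilde g$.
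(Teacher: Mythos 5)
Your proposal is correct and follows essentially the same route as the paper's own proof: you use the characterization of $\tilde g$ and its effective domain $\mathcal N=\Set{(\xi-r)\underline 1:\xi\in[r,R]}\times\R$ to reduce (\ref{tilde-g-zeta}) to the linear equation $\eta\sigma(\sigma^\top\pi-\rho b\kappa)=\mu-\xi\underline 1$, solve for $\pi(\xi)$, substitute into (\ref{gtilde-pprime-2}) to recover $h(\kappa;\xi)=0$, split into the three regimes $\pi^\top\underline 1<1$, $>1$, $=1$ with $\xi=r,R,\xi^*$ for cases i.--iii., and invoke parts 2 and 3 of Theorem \ref{g-theorem} for the boundary cases iv.--vii. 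Your explicit verification that the candidate lands on the correct linear piece of $g$ (so the conjugate inequality holds with equality) is exactly the content the paper compresses into its case analysis, so no gap remains.
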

\begin{proof}
$\tilde g(\nabla_\pi H(\pi,\kappa))$ is finite if  $\nabla_\pi H(\pi,\kappa) =(\xi-r)\underline 1$ for some $\xi\in[r,R],$ in which case $(\ref{tilde-g-zeta})$ reads
\[
\xi-r=-(R-r)(\pi^\top\underline 1-1)^++\pi^\top\nabla_\pi H(\pi,\kappa).
\]
We can single out three cases: either $\pi^\top\underline 1<1, \pi^\top\underline 1>1$ or $\pi^\top\underline 1=1.$ For each case it suffices to have, respectively, $\nabla_\pi H(\pi,\kappa)=\underline 0,$ $\nabla_\pi H(\pi,\kappa)=(R-r)\underline 1$ or $\nabla_\pi H(\pi,\kappa)=(\xi^*-r)\underline 1$ for some $\xi^*\in[r,R].$ Solving these (linear) equations for $\pi$ as a function of $\kappa,$ and inserting into equation (\ref{gtilde-pprime-2}) yields $h(\kappa;\xi)=\underline 0$ with $\xi=r,R$ and $\xi^*$ respectively, and cases \emph{i, ii} and \emph{iii} follow. Cases \emph{iv - vii} follow similarly from \emph{2} and \emph{3} in Theorem \ref{g-theorem}.
\end{proof}

Note that $\pi(\xi)$ is the optimal Merton proportion plus a hedging component against the insurable exogenous jump risk. That is, the agent uses the optimal demand in risky assets to manage its exposure to both financial and background risk. The correlations also contribute to the diversification effect. Moreover, note that if (\ref{pi-xi}) holds, no wealth is invested in the riskless market account. In this case, $\xi$ can be interpreted as a shadow risk-free interest  rate implied by the optimal policy which lies between the riskless lending and borrowing rates and conforms to the risk preferences of the firm.



\begin{remark}
As in Remark \ref{exist-kappah} $\kappa(\xi)$ exists and is unique if $h(0;\xi)>0>h(1;\xi).$ This is equivalent to the following
\begin{equation}\label{cond-exist-kappa-xi}
\lambda\Exp[Y]+p'(0)<b\rho^\top\sigma^{-1}(\mu-\xi\underline 1)<\eta b^2(1-\abs{\rho}^2)+\lambda\Exp\Bigl[\frac{Y}{(1-Y)^\eta}\Bigr]+p'(1).
\end{equation}
Moreover, implicit differentiation of $h(\kappa(\xi);\xi)=0$ gives $\frac{\partial h}{\partial \kappa}\kappa'(\xi)+\frac{\partial h}{\partial \xi}=0,$ hence
	\[
	\kappa'(\xi)=\frac{-\frac{\partial h}{\partial \xi}}{\frac{\partial h}{\partial \kappa}}=\frac{-b\rho^\top\sigma^{-1}\underline 1}{\eta\left[b^2(1-\abs{\rho}^2)+\lambda\Exp\Bigl(\frac{Y^2}{[1-{{\kappa}}Y]^{1+\eta}}\Bigr)\right]+p''(\kappa)}
	\]
	and
	\begin{align*}
		\frac{d}{d\xi}[\pi(\xi)^\top\underline 1]&=-\frac{\abs{(\sigma^\top)^{-1}\underline 1}^2}{\eta}+bk'(\xi)\rho^\top\sigma^{-1}\underline 1\\
		&=\frac{-1}{\eta}\left\{\abs{(\sigma^\top)^{-1}\underline 1}^2+\frac{(b\rho^\top\sigma^{-1}\underline 1)^2}{\left[b^2(1-\abs{\rho}^2)+\lambda\Exp\Bigl(\frac{Y^2}{[1-{{\kappa}}Y]^{1+\eta}}\Bigr)\right]+\eta p''(\kappa)}\right\}<0
	\end{align*}
	that is $\pi(\xi)^\top\underline 1$ is strictly decreasing on its domain. Then, if  (\ref{cond-exist-kappa-xi}) holds for some $\xi^*$ with $\pi(\xi^*)^\top\underline 1<1,$  we can ensure existence of $r^*$ such that $\pi(r^*)^\top\underline 1=1,$ and the optimality criteria of Corollary \ref{cor-diff-rates} simplifies into
	\[
	(\hpi,\hat\kappa)
	=\left\{
	\begin{array}{ll}
		(\pi(r),\kappa(r)), & \ r>r*\\\\
		(\pi(R),\kappa(R)), & \ R<r*\\\\
		(\pi(r^*),\kappa(r^*)), & \ r\le r^*\le R
	\end{array}
	\right.
	\]
\end{remark}
\begin{example}[Beta distribution with linear premium function]
In order to illustrate the previous results, we consider the case of linear premium rate function $p(\kappa)=(1-\kappa)q$ with premium rate $q>0$ and $Y_n\sim$ Beta$(\alpha,\beta)$ with density function
\[
f_Y(y)=\frac{\Gamma(\alpha+\beta)}{\Gamma(\alpha)\Gamma(\beta)}y^{\alpha-1} (1-y)^{\beta-1}, \ \ y\in [0,1].
\]
This distribution is very flexible, as the density $f_Y(y)$ can be U-shaped with asymptotic ends, bell-shaped, strictly increasing/decreasing or even straight lines, depending on the parameters $\alpha,\beta.$ For this distribution, we have the following Euler-type identity
\begin{equation}\label{kappa-eta-beta-hyper}
\Exp\Bigl[\frac{Y}{(1-\kappa Y)^\eta}\Bigr]={}_{2}{F}_{1}(\eta,\alpha+1;\alpha+\beta+1,\kappa)\frac{\alpha}{\alpha+\beta}, \ \ \kappa\in [0,1)
\end{equation}
where ${}_{2}{F}_{1}$ is the Gaussian (or ordinary) hypergeometric function. The identity does not hold for $\kappa=1.$ However, we may calculate the integral for this value directly using the definition of the Beta function in terms of Gamma functions
\[
\Exp\Bigl[\frac{Y}{(1- Y)^\eta}\Bigr]=\frac{\alpha\Gamma(\alpha+\beta)\Gamma(\beta-\eta)}{\Gamma(\beta)\Gamma(\alpha+\beta+1-\eta)}, \ \ \eta <\beta.
\]
For $\kappa=0$ we have $\Exp[Y]=\frac{\alpha}{\alpha+\beta},$ so inequalities (\ref{cond-exist-kappa-xi}) read
\[
\frac{\lambda\alpha}{\alpha+\beta}<q+b\rho^\top\sigma^{-1}(\mu-\xi\underline 1)<\eta b^2(1-\abs{\rho}^2)+\frac{\lambda\alpha\Gamma(\alpha+\beta)\Gamma(\beta-\eta)}{\Gamma(\beta)\Gamma(\alpha+\beta+1-\eta)}.
\]
We present some numerical examples with two risky assets ($d=2$) and two sets of parameters $A_1$ and $A_2.$ For both sets we assume $\alpha=2,$ $\beta=8$, $\lambda=0.25,$
\[
\rho=\begin{pmatrix}0.2\\-0.3\end{pmatrix} \ \ \mbox{ and }  \ \ \sigma=\begin{pmatrix} 0.25 & 0 \\ 0.32s & 0.32\sqrt{1-s^2} \end{pmatrix}.
\]
The remaining parameters are as follows
\begin{table}
	\centering
	\begin{tabular}{|c|c|c|}
		\hline
		Parameter & $A_1$ & $A_2$\\
\hline
		$q$ & $0.3$ & $0.8$ \\
		$r$ & $2\%$ & $3\%$ \\
		$R$ & $6\%$ & $10\%$ \\
		$b$ & $0.4$ & $0.6$ \\
		$s$ & $0.25$ & $0.05$ \\
		$\mu$ & $\displaystyle \begin{pmatrix} 8\% \\ 10\% \end{pmatrix}$ & $\displaystyle \begin{pmatrix} 16\% \\ 8\% \end{pmatrix}$ \\
		\hline
	\end{tabular}
	\caption{Parameter sets $A_1$ and $A_2.$}
\end{table}
Figures \ref{fig:pi_12_eta_A} and \ref{fig:pieta_A} show the plots of $\hat\pi^i,$ for $i=1,2$ and $\hat\pi^\top\underline 1=\hat\pi^1+\hat\pi^2$ (optimal total position in risky assets) as functions of risk aversion coefficient $\eta,$ respectively.  In Figure \ref{fig:pi_12_eta_A} we see that for the parameter set $A_1$ both $\hpi^1$ and $\hpi^2$ decrease for small values of $\eta$, whereas for set $A_2$ the optimal weight $\hpi^2$ is mostly negative and increasing. This is due mostly to asset $1$ in set $A_2$ having an expected return ($\mu_1=16\%$) greater than the borrowing rate $R=10\%$ and the expected return of asset 2 $(\mu_2=8\%$), so the agent leverages its long position in asset 1 by borrowing from the risk-free market account and short-selling asset 2. 
\begin{figure}[H]
	\centering
	\includegraphics[scale=0.4]{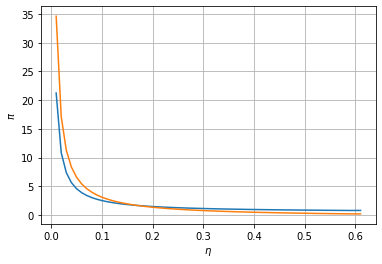}
	\includegraphics[scale=0.4]{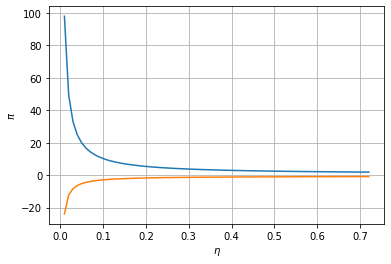}
	\caption{$\hpi^1(\eta)$ in blue and $\hpi^2(\eta)$ in orange, for parameter sets $A_{1}$ (left) and $A_2$ (right).}\label{fig:pi_12_eta_A}
\end{figure}
Figure \ref{fig:pieta_A} illustrates the first three cases of Corollary \ref{cor-diff-rates}. As expected,  $\hat\pi^\top\underline 1=\hat\pi^1+\hat\pi^2$ decreases as $\eta<\beta$ increases. Moreover, there exist $0<\eta_R<\eta_r$ such that $\hpi^\top\underline 1>1$ (leveraged portfolio, case ii of Corollary \ref{cor-diff-rates}) decreases strictly on $(0,\eta_R],$ equals 1 (risky assets only, case iii of Corollary \ref{cor-diff-rates}) over $[\eta_R,\eta_r]$ and then again decreases strictly and tends to $0$ on $[\eta_r,\infty).$ In particular, for $\eta\in [\eta_R,\eta_r]$ the agent's portfolio is fully invested in risky assets. For set $A_1$ we have  $\eta_r =  1.47$ and $\eta_R =  0.6$, for set $A_2$ we get $\eta_r =  2.35$ and $\eta_R =  0.71$. 
\begin{figure}[H]
	\centering
	\includegraphics[scale=0.4]{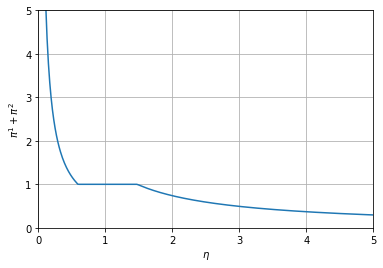}
	\includegraphics[scale=0.4]{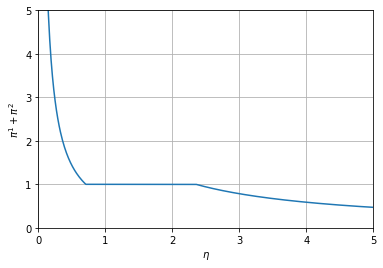}
	\caption{$\hpi(\eta)^\top\underline 1$ in sets of parameters  $A_{1}$ (left) and $A_2$ (right). For set $A_1$ we have  $\eta_r =  1.47$ and $\eta_R =  0.6$, for set $A_2$ we get $\eta_r =  2.35$ and $\eta_R =  0.71$.}\label{fig:pieta_A}
\end{figure}
Figure \ref{fig:pi_1vs2_eta_A} shows the plots of the two-dimensional curves $(\hpi^1,\hpi^2)$ as functions of $\eta.$ Unlike Figure \ref{fig:pi_12_eta_A} we consider values of $\eta$ in the whole range $(0,\beta).$ We observe that for both parameter specifications, portfolio weights move in different directions on both sides of the hyperplane $\pi^1+\pi^2=1.$ Indeed, for set $A_1$ the optimal weight $\hat\pi^1$ decreases until $\eta$ reaches the threshold value $\eta_R,$ increases slightly until $\eta_r$ and then decreases  again, while $\hat\pi^2$ is always decreasing. For set $A_2$ optimal weight $\hat\pi^1$ is always decreasing, while $\hat\pi^2$ is always increasing. This can be explained mostly from the correlations with the continuous part of the insurable risk (positive for the first asset, negative for the second asset) and from the spreads of the mean return rate with the borrowing rate $\mu_i-R$, as they are positive for all cases except for the second asset in the parameter set $A_2.$ 
\begin{figure}[H]
	\centering
	\includegraphics[scale=0.4]{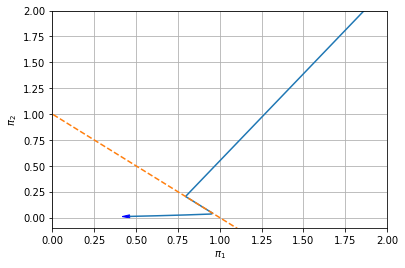}
	\includegraphics[scale=0.4]{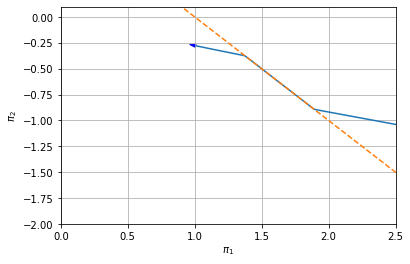}
	\caption{$\hpi_1(\eta) \ vs. \ \hpi_2(\eta)$ for parameter specifications $A_{1}$ (left) and $A_2$ (right).}\label{fig:pi_1vs2_eta_A}
\end{figure}
Table \ref{etaR-R} contains values of $\eta_R$ for the parameter set $A_2$ and different decreasing values of $R$, and figure \ref{plots_diff_etaR} contains the plots of $\hpi^1+\hpi^2$ and $\hpi^1$ vs. $\hpi^2$ as functions of $\eta$ for different values of $R.$ We note that the direction of $\hpi$ changes before reaching $\pi^1+\pi^2=1.$ 
\begin{table}
	\centering
	\begin{tabular}{|c|ccccccccc|}
		\hline	$R$ &  10\% & 9\% & 8\% & 7\% & 6\% & 5\% & 4\% & $3.5\%$ & $3.1\%$\\
		$\eta_R$ & $0.71$ & $0.95$  & $1.18$ & $1.42$ & $1.65$ & $1.89$ & $2.12$ & $2.24$ & $2.33$ \\\hline
	\end{tabular}
\caption{$\eta_R$ as function of borrowing rate $R>r.$}
\label{etaR-R}
\end{table}
From these, we see that the threshold value $\eta_R$ moves in opposite direction to the borrowing rate $R>r$, and $\eta_R\uparrow\eta_r$ as $R\downarrow r$. 
Similar behavior holds for $\eta_r$ as a function of the lending rate $r.$ However, note that the both $\eta_R$ and $\eta_r$ do not depend solely on the values $r$ and $R.$ 
\begin{figure}[H]
	\centering
	\includegraphics[scale=0.4]{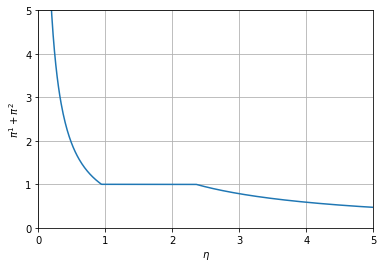}
	\includegraphics[scale=0.4]{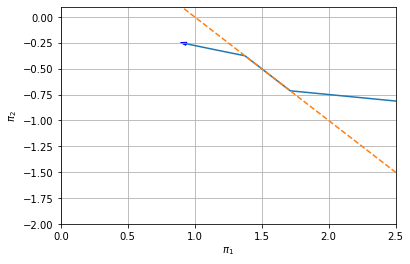}\\
	\includegraphics[scale=0.4]{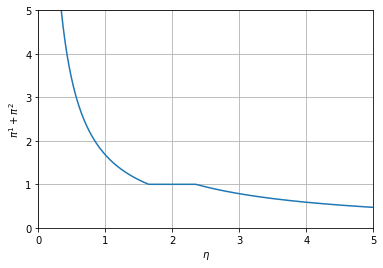}
	\includegraphics[scale=0.4]{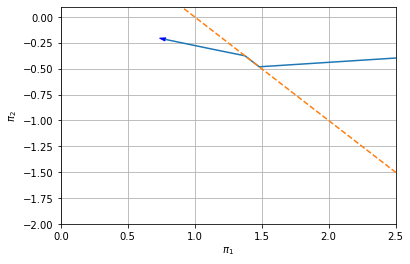}\\
	\includegraphics[scale=0.4]{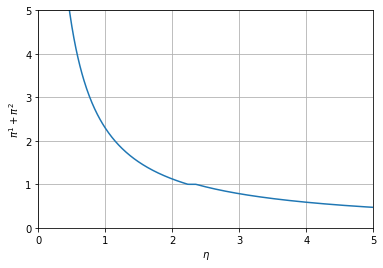}
	\includegraphics[scale=0.4]{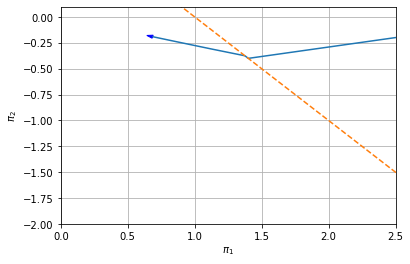}
	\caption{Plots of $\hpi^\top\underline 1$ (left) and $\hpi$ (right) as functions of $\eta>0$ for parameter set $A_2.$ From top to bottom we take $R = 9\%, 6\%, 3.5\%$}\label{plots_diff_etaR}
\end{figure}

In Figure \ref{fig:kappa_eta_A} we present plots of the optimal non-insured fraction $\hat\kappa$ of background risk, that is, the fraction assumed by the agent, as function of $\eta.$ Again, there exists a threshold value $\eta^*>0$ such that if $\eta\le\eta^*$ then the agent assumes totally the insurable background risk, whereas for $\eta>\eta^*$ the optimal fraction $\hat\kappa$ of insurable risk assumed by the agent is a decreasing convex function of $\eta.$
\begin{figure}[H]
	\centering
	\includegraphics[scale=0.4]{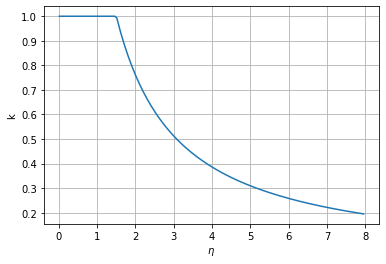}
	\includegraphics[scale=0.4]{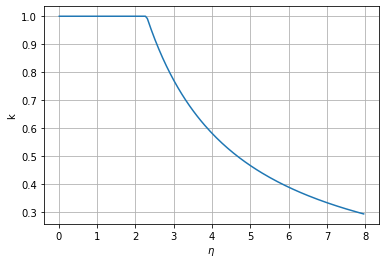}
	\caption{$\hat\kappa(\eta)$ for parameter specifications $A_{1}$ and $A_2.$}\label{fig:kappa_eta_A}
\end{figure}
Note that the range in which $\hat\kappa(\eta)$ remains constant equal to one is larger for set $A_2$ partially due to the cost of insurance $q$ being higher for this specification. 
\end{example}

\begin{example}
We now look into the relation of the optimal strategy with the correlation $\rho$ in  the case of only one risky asset ($d=1$). Again, we assume Beta distribution for the jump part of the insurable background risk, and consider four sets of parameters. For sets $B_1$ and $B_2$ we take $\alpha=2,$ $q=0.3,$ $\lambda =0.1$ and $\sigma=26\%$. For sets $C_1$ and $C_2$ we assume $\alpha=12$, $q=0.2,$  $\lambda =0.15$ and $\sigma=30\%$. For all four sets the borrowing and lending are the same: $r=3\%$, $R=9\%$. The remaining parameters are as follows
\begin{table}
	\centering
	\begin{tabular}{|c|c|c||c|c|}
		\hline
		Parameter &  $B_1$ &  $B_2$ &  $C_1$ &  $C_2$ \\
\hline
		$\beta$ & $8$ & $6$ & $8$ & $8$ \\
		$b$ & $0.4$ & $0.4$ & $0.8$ & $0.4$ \\
		$\mu$ & $16\%$ & $10 \%$  & $-5 \%$ & $16 \%$ \\
		$\eta$ & $2$ & $4$ & $4$ & $4$ \\
		\hline
	\end{tabular}
	\caption{Parameter sets $B_1,B_2,C_1$ and $C_2.$}\label{tab:parsetBC}
\end{table}
Figure \ref{fig:pirho} contains plots of the optimal proportion $\hpi$ of wealth invested in the risky asset as a function of the correlation $\rho$ with the continuous (diffusion) part of the insurable background risk. Note that in most cases the agent relies more in the risky asset as the correlation increases. Moreover, for set $B_1$ we have $\hpi(\rho)= 1$ in for $\rho \in [0.01,0.3]$, for set $B_2$ we get $\hpi(\rho)= 1$ for $\rho \in [0.6,0.77]$. In the notation of Corollary \ref{cor-diff-rates}, for these values of $\rho$ there exists $\xi^*\in[3\%,9\%]$ such that
\[
\frac{\mu-\xi^*}{\eta\sigma^2}+\frac{\rho b}{\sigma}\kappa(\xi^*;\rho)=1.
\] 
For set $C_1$ since $\frac{\mu-r}{\eta\sigma^2}=-0.2222<1$ and $\mu<r$, by part \emph{iv} of Corollary \ref{cor-diff-rates} $\hat\kappa=0$ and $\hpi=-0.2222$ are optimal for values of $\rho$ larger than
\begin{equation}\label{threshold-rho}
\frac{\sigma}{b(\mu-r)}[\lambda \Exp Y+p'(0)]=\frac{\sigma}{b(\mu-r)}\Bigl[\frac{\lambda\alpha}{\alpha+\beta}-q\Bigr]=0.5156.
\end{equation}
Similarly, for set $C_2$ we get $\hat\kappa=0$ and $\hpi=\frac{\mu-r}{\eta\sigma^2}= 0.3611$ are optimal in the region $\rho \in [-1,-0.6346]$.
\begin{figure}[H]
	\centering
	\includegraphics[scale=0.35]{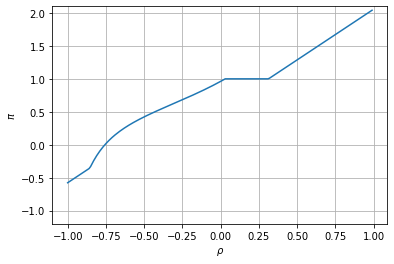}
	\includegraphics[scale=0.35]{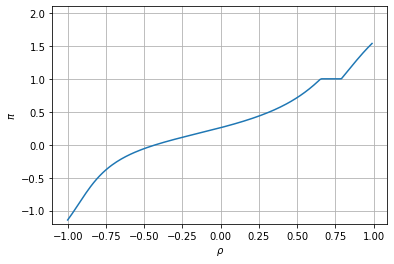}\\
	\includegraphics[scale=0.35]{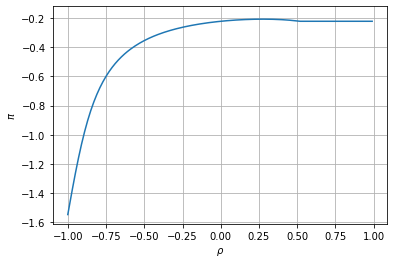}
	\includegraphics[scale=0.35]{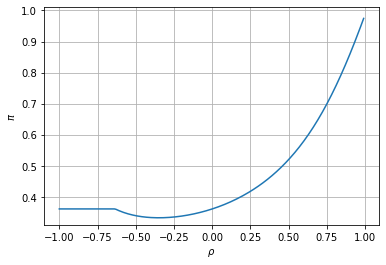}
	\caption{$\hpi(\rho)$ for sets $B_{1}, B_2$ (top) and $C_{1}, C_2$ (bottom).} \label{fig:pirho}
\end{figure}

Figure \ref{fig:kapparho} contains the plots of the optimal non-insured proportion $\hat\kappa$ with respect to the correlation $\rho$. The plots for sets $B_1$ and $B_2$ are U-shaped. In particular, for set $B_1$ if we further take $\rho=-1$ then
\begin{align*}
\lambda\Exp\Bigl[\frac{Y}{(1-Y)^\eta}\Bigr]-\frac{b\rho(\mu-r)}{\sigma}+\eta(1-\rho^2)-q&=-0.05714<0,\\
\frac{\mu-r}{\eta\sigma^2}+\frac{b\rho}{\sigma}&=-0,5769<1.
\end{align*}
By part \emph{vi} of Corollary \ref{cor-diff-rates} it is optimal for the agent to assume completely the insurable background risk, in this case by selling short the risky asset and benefiting from the negative (inverse) correlation. In fact, this choice is optimal near $\rho=-1.$ The same occurs for $\rho=1$ since
\begin{align*}
	\lambda\Exp\Bigl[\frac{Y}{(1-Y)^\eta}\Bigr]-\frac{b\rho(\mu-R)}{\sigma}+\eta(1-\rho^2)-q&=-0.3648<0,\\
	\frac{\mu-R}{\eta\sigma^2}+\frac{b\rho}{\sigma}&=2.0562>1.
\end{align*} 
Hence, by part \emph{vii} of Corollary \ref{cor-diff-rates} $\hat\kappa=1$ is again optimal, in this case by borrowing from the risk-free account to leverage a larger position in the risky asset and benefit from the positive (direct) correlation with the diffusion part of the background risk.
\begin{figure}[H]
	\centering
	\includegraphics[scale=0.35]{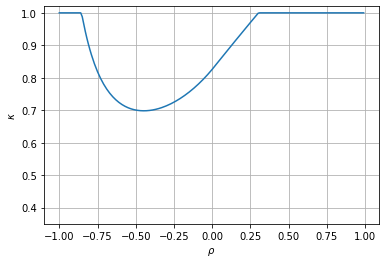}
	\includegraphics[scale=0.35]{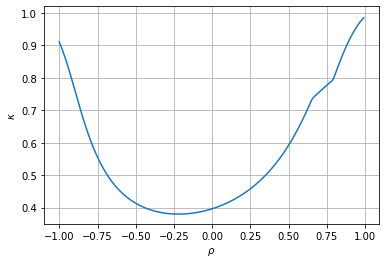}\\
	\includegraphics[scale=0.35]{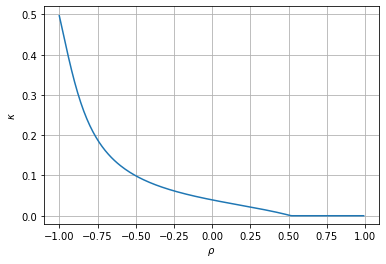}
	\includegraphics[scale=0.35]{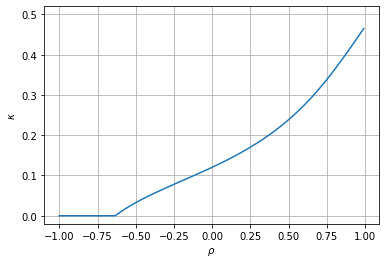}
	\caption{$\hat\kappa(\rho)$ for sets $B_{1}, B_2$ (top) and $C_{1}, C_2$ (bottom).} \label{fig:kapparho}
\end{figure}
For set $C_1$ we have that $\hat\kappa(\rho)$ is strictly decreasing for $\rho<0.5156$ and, as we already mentioned above, it stays constant equal to zero beyond that value. Since $\mu<r$ in this case this threshold value (\ref{threshold-rho}) for the correlation  increases with $q$, $\sigma$ and $\beta$, and decreases with $b$ and $\lambda.$

For parameter set $C_2$ we obtain the opposite behavior: $\hat\kappa(\rho)=0$ for $\rho<-0.6346$ and then increases strictly, but never reaches the value one. In general, if $\mu <r+\eta\sigma^2$ then the agent invests the Merton proportion in the risky asset and buys full insurance against the background risk as long as $\frac{\lambda\alpha}{\alpha+\beta}\geq q+\frac{b\rho}{\sigma}(\mu-r).$ Note this last inequality, which ensures optimality of $\hat\kappa=0,$ does not depend on the risk-aversion parameter $\eta.$ A similar assertion holds if $\mu>R+\eta\sigma^2.$

Finally, we look at the effect of first-order stochastic dominance of the jump distribution on optimal $\hat\kappa.$ We use the set of parameters $C_2$ and consider two values for $\alpha=2,12.$ In this case, Beta$(12,8)$ dominates Beta$(2,8),$ see Figure \ref{fig:beta_StDom}.
\begin{figure}[H]
	\centering
	\includegraphics[scale=0.35]{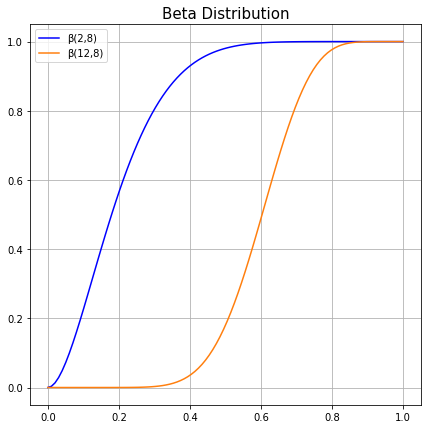}
	\ \ 
	\includegraphics[scale=0.35]{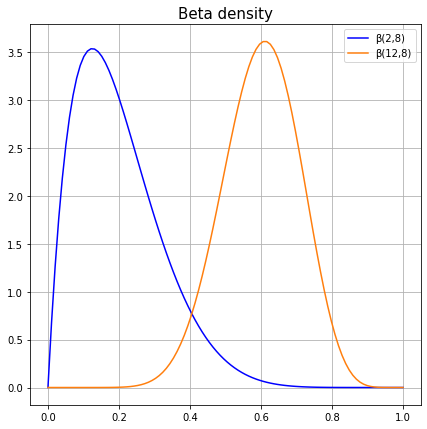}
	\caption{Cumulative distribution functions (left) and densities (right) of distributions Beta$(2,8)$ (blue) and Beta$(12,8)$ (orange) .} \label{fig:beta_StDom}
\end{figure}
Figure \ref{fig:kapparho_StDom} depicts $\hat{\kappa}(\rho)$ for these two jump distributions. As expected $\hat\kappa$ is larger for Beta$(2,8)$ and takes strictly positive values, unlike the case of Beta$(12,8)$ which remains zero for certain values of $\rho.$ Jumps with distribution Beta$(12,8)$ represent a greater risk than those with distribution Beta$(2,8)$.
\begin{figure}[H]
	\centering 
	\includegraphics[scale=0.6]{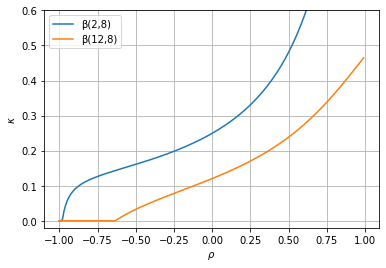}
	\caption{$\hat\kappa(\rho)$ for parameter set $C_2$ with jump distributions Beta$(12,8)$ (orange) and Beta$(2,8)$ (blue).} \label{fig:kapparho_StDom}
\end{figure}


\end{example}

\begin{remark}
	For the case $d=1,$ note that equation $h(\kappa,\xi)=0$ can be solved for $\xi$ explicitly for each $\kappa,$ so the zero-level curves can alternatively be characterized as the set of points $(\kappa,\bar\xi(\kappa))$ with
	\[
	\bar\xi(\kappa)=\mu-\frac{\sigma}{b\rho}\left[p'(\kappa)+(1-\rho^2)b^2\eta {\kappa}+\lambda\Exp\left(\frac{Y}{[1-{{\kappa}}Y]^\eta}\right)\right], \ \ k\in[0,1].
	\]
	Define
	\[
	\bar\pi(\kappa):=\pi(\bar\xi(\kappa))=\frac{1}{\eta b\sigma\rho}\left[p'(\kappa)+\lambda\Exp\left(\frac{Y}{[1-{{\kappa}}Y]^\eta}\right)\right]+\frac{b}{\rho\sigma}\kappa.
	\]
	Then
	\[
	\bar\pi'(\kappa)=\frac{1}{\eta b\sigma\rho}\left[p''(\kappa)+\eta\lambda\Exp\left(\frac{Y^2}{[1-{{\kappa}}Y]^{1+\eta}}\right)\right]+\frac{b}{\rho\sigma}.
	\]
	That is, $\bar\pi$ is strictly increasing (resp. strictly decreasing) if $\rho>0$ (resp. if $\rho<0$). If $\rho>0$ and
	\[
	p'(0)+\lambda\Exp Y<b\sigma\eta\rho<p'(1)+\lambda\Exp\left(\frac{Y}{[1-Y]^\eta}\right)+\eta b^2
	\]
	then there exists $\kappa^*$ such that $\bar\pi(\kappa^*)=1,$ and the optimal strategy can be rewritten as
	\[
	(\hpi,\hat\kappa)
	=\left\{
	\begin{array}{ll}
		(\pi(r),\kappa(r))=(\bar\pi(\kappa(r)),\kappa(r)), & \ \bar\xi(\kappa^*)<r\\\\
		(\pi(R),\kappa(R))=(\bar\pi(\kappa(R)),\kappa(R)), & \ \bar\xi(\kappa^*)>R\\\\
		(1,\kappa^*), & \ r\le \bar\xi(\kappa^*)\le R
	\end{array}
	\right.
	\]
	A similar result holds for the case $\rho<0.$
\end{remark}
We observe in Figure \ref{fig:pi_1vs2_eta_A} that optimal portfolios move along 
one-dimensional portfolio lines. Indeed, the following mutual-fund separation-type result holds for the case of linear premium function. Let $\pi(\xi,\eta)$  and $\kappa(\xi,\eta)$ be as in Corollary \ref{cor-diff-rates}, with risk aversion level $\eta>0$ as an additional variable.
\begin{theorem}[Mutual-Fund Separation Theorem]\label{mutual}
Suppose the premium function is linear $p(\kappa)=q(1-\kappa)$ with premium rate $q>0$. Let $\eta_1<\bar\eta<\eta_2.$ 
\begin{enumerate}
	\item If $\pi(r,\eta_1)^\top\underline 1<1$ and $\pi(r,\eta_2)^\top\underline 1<1$ then there exists $\delta\in(0,1)$ such that
	\[
	\delta(\pi(r,\eta_1),\kappa(r,\eta_1))+(1-\delta)(\pi(r,\eta_2),\kappa(r,\eta_2))
	\]
	is optimal for the risk aversion level $\bar\eta.$
	
	\item  $\pi(R,\eta_1)^\top\underline 1>1$ and $\pi(R,\eta_2)^\top\underline 1>1$ then there exists $\delta\in(0,1)$ such that
	\[
	\delta(\pi(R,\eta_1),\kappa(R,\eta_1))+(1-\delta)(\pi(R,\eta_2),\kappa(R,\eta_2))
	\]
	is optimal for the risk aversion level $\bar\eta.$
	
	\item  $\pi(\xi_1,\eta_1)^\top\underline 1=\pi(\xi_2,\eta_2)^\top\underline 1=1$ for some $\xi_1,\xi_2\in[r,R]$ then there exists $\delta\in(0,1)$ such that
	\[
	\delta(\pi(\xi_1,\eta_1),\kappa(\xi_1,\eta_1))+(1-\delta)(\pi(\xi_2,\eta_2),\kappa(\xi_2,\eta_2))
	\]
	is optimal for the risk aversion level $\bar\eta.$
\end{enumerate}
\end{theorem}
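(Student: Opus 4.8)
The plan is to choose the mixing weight $\delta$ dictated by the Merton ($1/\eta$) scaling and then verify, via the explicit optimality system of Corollary \ref{cor-diff-rates} at level $\bar\eta$, that the resulting convex combination is the (unique) optimizer. Each candidate has the form $\pi(\xi_j,\eta_j)=\frac1{\eta_j}(\sigma\sigma^\top)^{-1}(\mu-\xi_j\underline 1)+(\sigma^\top)^{-1}\rho b\,\kappa(\xi_j,\eta_j)$, so the $\eta$-dependence of the portfolio enters only through the factor $1/\eta$; this suggests taking $\delta\in(0,1)$ as the solution of $\frac1{\bar\eta}=\frac{\delta}{\eta_1}+\frac{1-\delta}{\eta_2}$. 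Since $\eta_1<\bar\eta<\eta_2$ gives $\tfrac1{\eta_1}>\tfrac1{\bar\eta}>\tfrac1{\eta_2}$, this $\delta$ indeed lies in $(0,1)$. I fix this $\delta$ and let $(\pi^*,\kappa^*)$ denote the corresponding convex combination of the two endpoint pairs.

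First I would verify the portfolio (gradient) condition. Combining the two expressions for $\pi(\xi_j,\eta_j)$ convexly, the hedging terms add up to $(\sigma^\top)^{-1}\rho b\,\kappa^*$, while the Merton terms, by the choice of $\delta$, collapse to $\frac1{\bar\eta}(\sigma\sigma^\top)^{-1}(\mu-\xi^*\underline 1)$ with effective shadow rate $\xi^*:=\bar\eta\bigl(\tfrac{\delta\xi_1}{\eta_1}+\tfrac{(1-\delta)\xi_2}{\eta_2}\bigr)$. Hence $\nabla_\pi H(\pi^*,\kappa^*;\bar\eta)=(\xi^*-r)\underline 1$, exactly the gradient required by Corollary \ref{cor-diff-rates}. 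In cases 1 and 2 one has $\xi_1=\xi_2=\xi^*\in\{r,R\}$, so this reduces to $\underline 0$ or $(R-r)\underline 1$; moreover $\pi^{*\top}\underline 1$ is the convex combination of the endpoint values, so the defining inequality ($<1$, $>1$, or the equality $=1$ of case 3) is preserved and $\xi^*\in[r,R]$. This step is essentially linear algebra and should go through cleanly.

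The decisive step is the insurance first-order condition $\partial_\kappa H(\pi^*,\kappa^*;\bar\eta)=p'(\kappa^*)=-q$, i.e.\ $h(\kappa^*;\xi^*,\bar\eta)=0$. Substituting $\xi^*$ and using the endpoint identities $h(\kappa_j;\xi_j,\eta_j)=0$ together with the linear premium $p'\equiv-q$, all the terms of $h$ that are affine in $(\xi,\kappa)$ recombine correctly precisely because of the harmonic relation defining $\delta$; what does not combine automatically is the jump term, leaving one to match $\lambda\Exp[Y(1-\kappa^*Y)^{-\bar\eta}]$ against the induced mixture of the endpoint terms $\lambda\Exp[Y(1-\kappa_jY)^{-\eta_j}]$. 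This is the main obstacle, since $(\kappa,\eta)\mapsto\Exp[Y(1-\kappa Y)^{-\eta}]$ is not affine and the required identity is not free. I would attack it by exploiting the strict monotonicity of $\kappa\mapsto h(\kappa;\xi^*,\bar\eta)$ established in Remark \ref{exist-kappah}: this guarantees a unique root, so it suffices to show that $\kappa^*$ is that root. Treating everything as functions of $\delta$ along the segment---where $\kappa^*(\delta)$ is affine while $\bar\eta(\delta)$ and $\xi^*(\delta)$ are fixed by the harmonic relation---I would compare $\kappa^*(\delta)$ with the true root $\kappa(\xi^*,\bar\eta)$, which coincide at the endpoints $\delta\in\{0,1\}$, and use the sign information $h'<0$ to force their agreement on the interior. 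The linearity of the premium is crucial to keep the relation $\partial_\kappa H=-q$ parameter-free, and Lemma \ref{lemma-optcond} ensures that verifying these first-order conditions is sufficient for optimality.
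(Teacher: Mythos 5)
Your opening steps are fine: with the harmonic choice $\tfrac{1}{\bar\eta}=\tfrac{\delta}{\eta_1}+\tfrac{1-\delta}{\eta_2}$ the Merton terms do recombine and $\nabla_\pi H(\pi^*,\kappa^*;\bar\eta)=(\xi^*-r)\underline 1$ with $\xi^*\in[r,R]$, exactly as you compute. The genuine gap is at the step you yourself call decisive, and the fix you propose does not close it. The strict monotonicity of $\kappa\mapsto h(\kappa;\xi^*)$ from Remark \ref{exist-kappah} gives uniqueness of the root \emph{for the fixed pair} $(\xi^*,\bar\eta)$; it says nothing about whether the affine path $\kappa^*(\delta)=\delta\kappa(\xi_1,\eta_1)+(1-\delta)\kappa(\xi_2,\eta_2)$ passes through that root at your particular $\delta$. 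Two continuous curves that agree at $\delta=0$ and $\delta=1$ can disagree everywhere in between, and ``$h'<0$ in $\kappa$'' provides no control over how the root moves as $(\xi^*,\bar\eta)$ vary along the segment. Worse, the identity you need is equivalent to the root map $\eta\mapsto\kappa(\xi,\eta)$ being affine as a function of $1/\eta$. That is true when $\lambda=0$, where the defining equation gives $\kappa(\xi,\eta)=[\,b\rho^\top\sigma^{-1}(\mu-\xi\underline 1)+q\,]/[\,\eta b^2(1-\abs{\rho}^2)\,]$, exactly linear in $1/\eta$; but it fails once the jump term $\lambda\Exp\bigl[Y(1-\kappa Y)^{-\eta}\bigr]$ is present, since that term is not affine in $(\kappa,1/\eta)$. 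A quick check with $d=1$, $\rho=0$, $b=\lambda=1$, $Y\equiv 1/2$, $q=1$ and $(\eta_1,\bar\eta,\eta_2)=(1,2,4)$ (so the harmonic $\delta=1/3$) gives $\delta\kappa_1+(1-\delta)\kappa_2\approx 0.1923$ while the true root $\kappa(r,\bar\eta)\approx 0.1936$: the insurance first-order condition fails at your $\delta$. In short, fixing $\delta$ by the Merton scaling over-determines the problem: the gradient condition forces the harmonic $\delta$, while the $\kappa$-condition independently pins down a (generically different) $\delta$, so no argument can complete your scheme when $\lambda>0$.

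This is also where your route departs from the paper's. The paper never commits to a $\delta$ in advance. It defines the scalar function
\[
L(\delta,\eta;\xi)=\tilde\pi(\delta;\xi)^\top\bigl[\nabla_\pi H(\tilde\pi(\delta;\xi),\tilde\kappa(\delta;\xi);\eta)-(\xi-r)\underline 1\bigr]+\tilde\kappa(\delta;\xi)\bigl[\partial_\kappa H(\tilde\pi(\delta;\xi),\tilde\kappa(\delta;\xi);\eta)+q\bigr],
\]
which aggregates both first-order conditions along the segment, observes that $L$ vanishes at the endpoints $(\delta,\eta)=(1,\eta_1)$ and $(0,\eta_2)$, that $L$ is strictly decreasing in $\eta$ (using (\ref{grad_partialH})), and then applies the intermediate value theorem in $\delta$ to produce a zero of $L(\cdot,\bar\eta;\xi)$, from which optimality is read off via Lemma \ref{lemma-optcond} as in Corollary \ref{cor-diff-rates}. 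The essential structural difference is that there $\delta$ is determined implicitly by an equation that involves the jump expectation, not by the $1/\eta$ scaling alone; your proposal replaces this implicit determination by an explicit formula for $\delta$, and that is precisely the step that cannot survive the nonlinearity of the jump term.
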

\begin{proof}
Recall that in the case of different rates for borrowing and lending, and linear premium function, we have $\tilde f((\xi-r)\underline 1,\gamma)=\xi-r-q$ for $\xi\in[r,R]$ and $\gamma\le -q.$ The associated optimality condition  (\ref{optcond-zeta-pikappa}) becomes
\[
-(R-r)(\pi^\top\underline 1 -1)^+ +\pi^\top \nabla_\pi H(\pi,\kappa;\eta)+\kappa [\partial_\pi H(\pi,\kappa;\eta)+q]=\xi-r
\]
and $\nabla_\pi H(\pi,\kappa;\eta)=(\xi-r)\underline 1$ for some $\xi\in[r,R].$ In view of this, for case \emph{1} we define
$L:(0,\infty)\times[0,1]\times[r,R]\to\R$ as
\begin{align*}
L(\delta,\eta;\xi):= &\,\tilde\pi(\delta;\xi)^\top[\nabla_\pi H(\tilde\pi(\delta;\xi),\tilde\kappa(\delta;\xi);\eta)-(\xi-r)\underline 1]\\
&+\tilde\kappa(\delta;\xi)[\partial_\kappa  H(\tilde\pi(\delta;\xi),\tilde\kappa(\delta;\xi);\eta)+q]
	\end{align*}
where for each $\delta\in[0,1]$ and $\xi\in[r,R]$ we denote
\begin{align*}
	\tilde \pi(\delta;\xi)&=\delta\pi(\xi,\eta_1)+(1-\delta)\pi(\xi,\eta_2),\\
	\tilde\kappa(\delta;\xi)&=\delta\kappa(\xi,\eta_1)+(1-\delta)\kappa(\xi,\eta_2).
\end{align*}
By definition of $\pi(\xi,\eta)$  and $\kappa(\xi,\eta),$ we have
\begin{align*}
\nabla_\pi H(\tilde\pi(0;r),\tilde\kappa(0;r);\eta_2)=\nabla_\pi H(\tilde\pi(1;r),\tilde\kappa(1;r);\eta_1)&=\underline 0,\\
\partial_\kappa H(\tilde\pi(0;r),\tilde\kappa(0;r);\eta_2)=\partial_\kappa H(\tilde\pi(1;r),\tilde\kappa(1;r);\eta_1)&=-q.
\end{align*}
Using the expressions (\ref{grad_partialH}) it is easy to see that $L$ is strictly decreasing in $\eta.$ Then, taking $\xi=r$ for case \emph{1} we get
\[
L(1,\bar\eta;r)<L(1,\eta_1;r)=0=L(0,\eta_2;r)=0<L(0,\bar\eta;r).
\]
By the intermediate value Theorem, there exists $\delta\in(0,1)$ such that $L(\delta,\bar\eta;r)=0.$ From the same argument in the proof of part \emph{i}. in Corollary \ref{cor-diff-rates} and Lemma \ref{lemma-optcond}, we get that  $(\tilde\pi(\delta;r),\tilde\kappa(\delta;r))$ is optimal for the risk-aversion level $\bar\eta.$ The argument is the same for \emph{2} taking $\xi=R,$ as in this case we have
\[
\nabla_\pi H(\tilde\pi(0;R),\tilde\kappa(0;R);\eta_2)=\nabla_\pi H(\tilde\pi(1;R),\tilde\kappa(1;R);\eta_1)=(R-r)\underline 1.
\]
For case \emph{3}, we suppose $\pi(\xi_1,\eta_1)^\top\underline 1=\pi(\xi_2,\eta_2)^\top\underline 1=1$ for $r\le\xi_1<\xi_2\le R.$ We now denote
\begin{align*}
	\tilde\pi(\delta)&=\delta\pi(\xi_1,\eta_1)+(1-\delta)\pi(\xi_2,\eta_2),\\
	\tilde\kappa(\delta)&=\delta\kappa(\xi_1,\eta_1)+(1-\delta)\kappa(\xi_2,\eta_2),
\end{align*}
and
\[
L(\delta,\eta,\xi):=\tilde\pi(\delta)^\top[\nabla_\pi H(\tilde\pi(\delta),\tilde\kappa(\delta);\eta)-(\xi-r)\underline 1]+\tilde\kappa(\delta)[\partial_\kappa  H(\tilde\pi(\delta),\tilde\kappa(\delta);\eta)+q]
\]
Let $\xi^*\in (\xi_1,\xi_2)$ be fixed. Since  $L$ is strictly decreasing in both $\eta$ and $\xi,$ and
\begin{align*}
	\nabla_\pi H(\tilde\pi(0),\tilde\kappa(0);\eta_2)&=(\xi_2-r)\underline 1,\\
	\nabla_\pi H(\tilde\pi(1),\tilde\kappa(1);\eta_1)&=(\xi_1-r)\underline 1,
\end{align*}
we obtain
\[
L(1,\bar\eta,\xi^*)< L(1,\eta_1,\xi_1)=0=L(0,\eta_2,\xi_2)<L(0,\bar\eta,\xi^*).
\]
Again, by the intermediate value Theorem, there exists $\delta\in(0,1)$ such that $L(\delta,\bar\eta,\xi^*)=0.$ By Lemma \ref{lemma-optcond} we conclude that the pair $(\tilde\pi(\delta),\tilde\kappa(\delta))$ is optimal for the risk-aversion level $\bar\eta.$
\end{proof}

\subsubsection{Large-investor with piecewise constant price pressure}
Let $d=1$ and $g(\pi)=\pi l(\pi)$ with $l=m^+ \mathbf{1}_{[0,+\infty)}+m^- \mathbf{1}_{(-\infty,0)},$ see e.g. Cuoco and Cvitani\'{c} \cite{cuoco1998} and Long \cite[Section 7.1]{long2004investment}. We assume $m^+<m^-$ so that $f$ is concave. In particular, if $m^+<0<m^-,$ long positions in the risky asset put pressure on its expected return, while
short positions increase its expected return.

In this case it is easy to see that $\tilde g=0$ with effective domain $\N=[-m^-,-m^+]\times\R.$ The same arguments of
Section \ref{exdiffrates-2} lead to the following result
\begin{corollary}\label{large}
	Suppose for each $m\in[m^+,m^-]$ there exists $\kappa(m)\in[0,1]$ solution to $h(\kappa;m)=0$ with
\[
h(\kappa;m):=\frac{b\rho}{\sigma}({\mu}+m-r)-\eta b^2(1-\rho^2){\kappa}-\lambda\Exp\Bigl[\frac{Y}{(1-\kappa Y)^\eta}\Bigr]-p'(\kappa)
\]
and set
	\[
	\pi(m):=\frac{{\mu}+m-r}{\eta\sigma^2}+\frac{1}{\sigma}\rho b \kappa(m).
	\]
	\begin{enumerate}
		\item[i.] If $\pi^+:=\pi(m^+)>0$ then $(\pi^+,\kappa(m^+))$ is optimal. In this case, the agent holds a long position in the risky asset.
		\item[ii.] If $\pi^-:=\pi(m^-)<0$ then $(\pi^-,\kappa(m^-))$ is optimal. In his case the agent holds a short position in the risky asset.
		
		\item[iii.] If there exists $m\in[m^+,m^-]$ such that $\pi(m)=0$ 	then $(0,\kappa(m))$ is optimal. In particular, the agent invests all in the risk-free asset. 
		
	\item[iv.] Suppose $\lambda \Exp[Y]+p'(0)\geq b\rho(\mu-r+m^+)/\sigma$ and $\mu+m^+>r.$ Then the pair $\hpi=(\mu-r+m^+)/\eta\sigma^2,$ $\hat\kappa=0$ is optimal. In particular, the agent holds a long position in the risky asset and insures totally against the adverse shocks.  

\item[v.] Suppose $\lambda \Exp[Y]+p'(0)\geq b\rho(\mu-r+m^-)/\sigma$ and $\mu+m^-<r.$ Then the pair $\hpi=(\mu-r+m^-)/\eta\sigma^2,$ $\hat\kappa=0$ is optimal. In particular, the agent holds a short position in the risky asset and insures totally against the adverse shocks.

\item[vi.] Suppose $\lambda\Exp\Bigl[\frac{Y}{(1-Y)^\eta}\Bigr]+\eta b^2(1-{\rho}^2)+p'(1)<b\rho(\mu-r+m^+)/\sigma$ and
\[
\mu+m^++\eta\sigma\rho b\geq r.
\]
Then the pair $\hpi=\frac{\mu-r+m^+}{\eta\sigma^2}+\frac{b\rho}{\sigma},$ $\hat\kappa=1$ is optimal. In this case the agent assumes totally the insurable background risk and holds a long position in the risky asset.  
\item[vii.] Suppose $\lambda\Exp\Bigl[\frac{Y}{(1-Y)^\eta}\Bigr]+\eta b^2(1-{\rho}^2)+p'(1)<b\rho(\mu-r+m^-)/\sigma$  and
\[
\mu+m^-+\eta\sigma\rho b\leq r.
\]
Then the pair $\hpi=\frac{\mu-r+m^-}{\eta\sigma^2}+\frac{b\rho}{\sigma},$ $\hat\kappa=1$ is optimal. In this case the agent assumes totally the insurable background risk and holds a short position in the risky asset.  
	\end{enumerate}
\end{corollary}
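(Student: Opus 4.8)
The plan is to follow verbatim the strategy used for Corollary \ref{cor-diff-rates}, since the only change is the shape of the margin payment function $g$ and hence of its conjugate $\tilde g$. First I would record the computation of $\tilde g$ announced in the text. Writing $g(\pi)=m^+\pi$ for $\pi\ge 0$ and $g(\pi)=m^-\pi$ for $\pi<0$, the map $\pi\mapsto g(\pi)+\pi\zeta$ is linear on each half-line, with slopes $m^++\zeta$ and $m^-+\zeta$; it attains a finite supremum (equal to $0$, achieved at $\pi=0$) precisely when $m^++\zeta\le 0\le m^-+\zeta$, i.e. $\zeta\in[-m^-,-m^+]$. Hence $\tilde g\equiv 0$ on the effective domain $\N=[-m^-,-m^+]\times\R$. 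With $\tilde g=0$, the Fenchel equality $\tilde g(\zeta)=g(\pi)+\pi\zeta$ is equivalent to the superdifferential condition $-\zeta\in\partial g(\pi)$, which singles out exactly three branches: $\zeta=-m^+$ when $\pi>0$, $\zeta=-m^-$ when $\pi<0$, and $\zeta\in[-m^-,-m^+]$ when $\pi=0$.

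With this in hand I would invoke Theorem \ref{g-theorem}. Setting $\zeta=\nabla_\pi H(\hpi,\hat\kappa)$, the optimality condition (\ref{tilde-g-zeta}) reads $g(\hpi)+\hpi\,\nabla_\pi H(\hpi,\hat\kappa)=0$, which by the branch analysis forces $\nabla_\pi H=-m^+$ (long), $\nabla_\pi H=-m^-$ (short), or $\nabla_\pi H\in[-m^-,-m^+]$ (flat). In the scalar setting $\nabla_\pi H=\mu-r-\eta\sigma^2\pi+\eta\sigma\rho b\kappa$, so each linear equation solves for $\pi$ in terms of $\kappa$, yielding $\pi=\pi(m):=\frac{\mu+m-r}{\eta\sigma^2}+\frac{\rho b}{\sigma}\kappa$ with $m\in\Set{m^+,m^-}$ for the long/short branches and some $m\in[m^+,m^-]$ for the flat branch. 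Substituting $\pi(m)$ into the remaining condition (\ref{gtilde-pprime-2}), $\partial_\kappa H=p'(\kappa)$, and using the expression for $\partial_\kappa H$ in (\ref{grad_partialH}), a short computation collapses the $\eta b^2\rho^2\kappa$ and $-\eta b^2\kappa$ terms into $-\eta b^2(1-\rho^2)\kappa$ and produces exactly $h(\kappa;m)=0$. Cases \emph{i, ii, iii} then follow: $m=m^+$ gives the long position provided $\pi(m^+)>0$, $m=m^-$ gives the short position provided $\pi(m^-)<0$, and an interior $m$ with $\pi(m)=0$ gives the pure money-market allocation, exactly as in the proof of Corollary \ref{cor-diff-rates}.

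For the corner cases \emph{iv--vii} I would appeal to parts \emph{2} and \emph{3} of Theorem \ref{g-theorem}, which handle $\kappa=0$ and $\kappa=1$. Evaluating $\partial_\kappa H$ at $\kappa=0$ with the long-branch proportion $\hpi=(\mu-r+m^+)/(\eta\sigma^2)$ gives $\partial_\kappa H(\hpi,0)=b\rho(\mu-r+m^+)/\sigma-\lambda\Exp[Y]$, so the inequality $\partial_\kappa H(\hpi,0)\le p'(0)$ of part \emph{2} is precisely the hypothesis of case \emph{iv}; the side condition $\mu+m^+>r$ guarantees $\hpi>0$, ensuring the correct ($\zeta=-m^+$) branch is active. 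Case \emph{v} is identical with $m^-$ and the reversed sign $\mu+m^-<r$. For cases \emph{vi} and \emph{vii} I would instead use part \emph{3} at $\kappa=1$ with $\hpi=\frac{\mu-r+m^+}{\eta\sigma^2}+\frac{b\rho}{\sigma}$ (resp. $m^-$); evaluating $\partial_\kappa H(\hpi,1)=b\rho(\mu-r+m^+)/\sigma-\eta b^2(1-\rho^2)-\lambda\Exp[Y/(1-Y)^\eta]$ and imposing $\partial_\kappa H(\hpi,1)>p'(1)$ reproduces the stated inequalities, while $\mu+m^++\eta\sigma\rho b\ge r$ (resp. $\mu+m^-+\eta\sigma\rho b\le r$) fixes the sign of $\hpi$ so that the corresponding linear piece of $g$ is the one realizing the Fenchel equality.

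The computation is routine, and there is no deep obstacle; the one point demanding care is the bookkeeping of the two linear branches of $g$. One must verify that each candidate proportion $\pi(m)$ actually lies on the half-line ($\pi>0$, $\pi<0$, or $\pi=0$) for which the chosen slope $m$ is the active superdifferential value of $g$, since otherwise the Fenchel equality (\ref{tilde-g-zeta}) fails and the pair would not satisfy the hypotheses of Lemma \ref{lemma-optcond}. This consistency is exactly what the sign conditions $\pi(m^+)>0$, $\pi(m^-)<0$, $\mu+m^+>r$, $\mu+m^-<r$, and their $\kappa=1$ analogues enforce, which is what turns the formal substitution into a rigorous optimality statement.
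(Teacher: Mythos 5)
Your proof is correct and takes essentially the same approach as the paper: the paper's own proof of Corollary \ref{large} is just the one-line observation that $\tilde g\equiv 0$ with effective domain $\N=[-m^-,-m^+]\times\R$ followed by an appeal to ``the same arguments'' as Corollary \ref{cor-diff-rates} via Theorem \ref{g-theorem}. Your branch analysis of the Fenchel equality ($\nabla_\pi H=-m^+$ for $\pi>0$, $-m^-$ for $\pi<0$, any $-m\in[-m^-,-m^+]$ for $\pi=0$), the substitution producing $h(\kappa;m)=0$, and the verification of the corner cases \emph{iv--vii} through parts \emph{2} and \emph{3} of Theorem \ref{g-theorem} spell out precisely those arguments, including the sign-consistency checks the paper leaves implicit.
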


\section{Linear premium function with premium rate depending on risky asset allocation}

As a final example, we consider the case $d=1$ and $f(\pi,\kappa)=-(1-\kappa)q(\pi)$ with $q$ a positive-valued differentiable convex function of the portfolio proportion $\pi\in\R.$ Such premium functions can be used to model insurers that incorporate information about the customer in the premium rate, thus offering coverage at a cost that reflects customers' actual risk exposure and preventing asymmetric information and adverse selection from being exploited. Clearly, this requires that the agent reveals all information on portfolio holdings to the potential insurer.

Although this assumption represents mostly a hypothetical scenario, our model and the following result shed some light on how the agent can make an optimal decision on both investment and insurance demand, given the correlation between the associated risks. Moreover, under the SEC\footnote{Securities and Exchange Comission} regulation, quarterly disclosure of portfolio holdings in public companies by institutional investment fund managers is mandatory, so the assumption is actually not too far from the reality of regulated investment companies such as mutual and  exchange-traded funds.

\begin{theorem}
Suppose  $q'$ is bounded and  sufficiently small so that
\begin{equation}\label{SOC}
[q'(\pi)+\eta \rho b\sigma]^2<\sigma^2\eta^2\left(b^2+\lambda\Exp[Y^2]\right), \ \ \forall\pi\in\R.
\end{equation}
Let
\begin{align*}
	Q(\pi)&:=q(\pi)+\eta b\sigma\rho\pi,\\
	G(\kappa)&:=\eta b^2\kappa+\lambda\Exp\Bigl[\frac{Y}{(1-\kappa Y)^\eta}\Bigr].
\end{align*}
Suppose further $Q$ is invertible on $\range G=\Bigl[\lambda\Exp Y,\eta b^2+\lambda\Exp\Bigl(\frac{Y}{[1-Y]^\eta}\Bigr)\Bigr]$ and there exists $\hat\kappa\in(0,1)$ such that $G(\hat\kappa)\in \range Q$ and
\begin{equation}\label{FOC}
\mu-r-\eta\sigma^2Q^{-1}(G(\hat\kappa))+\eta\sigma\rho b\hat\kappa-q'(Q^{-1}(G(\hat\kappa)))(1-\hat\kappa)=0
\end{equation}
then $\hat\pi:=Q^{-1}(G(\hat\kappa))$ and $\hat\kappa$ are optimal.
\end{theorem}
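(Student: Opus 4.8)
The plan is to show directly that $(\hpi,\hat\kappa)$ maximizes $F:=f+H$ over $\R\times[0,1]$, which by the HJB reduction of Section 3 is exactly the assertion of optimality. The key structural difference from the separable case is that $f(\pi,\kappa)=-(1-\kappa)q(\pi)$ is \emph{not} jointly concave: its Hessian has determinant $-[q'(\pi)]^2\le 0$, so the convex-conjugate argument of Lemma \ref{lemma-optcond} is unavailable here, and concavity of $H$ alone does not make $f+H$ concave. Instead I would verify interior stationarity of $F$ and then establish \emph{global} concavity of $F$ under (\ref{SOC}).

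First I would write down the first-order conditions. Using $\partial_\pi f=-(1-\kappa)q'(\pi)$ and $\partial_\kappa f=q(\pi)$ together with the expressions (\ref{grad_partialH}) specialized to $d=1$, interior stationarity of $F$ reads
\begin{align*}
	\partial_\pi F&=(\mu-r)-\eta\sigma^2\pi+\eta b\sigma\rho\kappa-(1-\kappa)q'(\pi)=0,\\
	\partial_\kappa F&=q(\pi)+\eta b\sigma\rho\pi-\eta b^2\kappa-\lambda\Exp\Bigl[\tfrac{Y}{(1-\kappa Y)^\eta}\Bigr]=0.
\end{align*}
The second equation is precisely $Q(\pi)=G(\kappa)$; since $Q$ is invertible on $\range G$ and $G(\hat\kappa)\in\range Q$, it is solved by $\hpi=Q^{-1}(G(\hat\kappa))$. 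Substituting this into the first equation yields exactly (\ref{FOC}), so the hypotheses guarantee that $(\hpi,\hat\kappa)$, with $\hat\kappa\in(0,1)$ and $\hpi\in\R$, is an interior critical point of $F$.

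The main work is the second-order step: showing that (\ref{SOC}) forces $F$ to be strictly concave, so that the interior critical point is automatically the unique global maximizer. I would compute
\[
\nabla^2 F=\begin{pmatrix}-(1-\kappa)q''(\pi)-\eta\sigma^2 & q'(\pi)+\eta b\sigma\rho\\[2pt] q'(\pi)+\eta b\sigma\rho & -\eta b^2-\eta\lambda\Exp\bigl[\tfrac{Y^2}{(1-\kappa Y)^{\eta+1}}\bigr]\end{pmatrix}.
\]
Both diagonal entries are negative, so negative definiteness reduces to positivity of the determinant. Two elementary bounds then remove the $(\pi,\kappa)$-dependence: convexity of $q$ with $\kappa\le 1$ gives $(1-\kappa)q''(\pi)\ge 0$, and $0<Y<1$ with $\kappa\ge 0$ gives $\Exp[Y^2(1-\kappa Y)^{-(\eta+1)}]\ge\Exp[Y^2]$. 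Hence the product of the diagonal entries is at least $\eta^2\sigma^2(b^2+\lambda\Exp[Y^2])$, and the determinant is bounded below by $\eta^2\sigma^2(b^2+\lambda\Exp[Y^2])-[q'(\pi)+\eta\rho b\sigma]^2$, which is strictly positive for every $\pi$ precisely by (\ref{SOC}). Thus $\nabla^2 F\prec 0$ on all of $\R\times[0,1)$, so $F$ is strictly concave there; an interior stationary point is therefore the unique global maximum, and concavity rules out any gain as $\kappa\uparrow 1$ (where the premium vanishes). Applying the verification argument of Section 3 then shows $(\hpi,\hat\kappa)$ is optimal.

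I expect the concavity step to be the crux. The subtlety is that (\ref{SOC}) is imposed \emph{uniformly in} $\pi$ exactly so that the off-diagonal coupling $q'(\pi)+\eta\rho b\sigma$ is dominated at all admissible states, while the monotone bound $\Exp[Y^2(1-\kappa Y)^{-(\eta+1)}]\ge\Exp[Y^2]$ is what lets a single $\kappa$-free inequality suffice across the whole domain. One should also confirm that the relevant expectations are finite on $[0,1)$ so that $\nabla^2 F$ is well defined and the strict-concavity argument is legitimate.
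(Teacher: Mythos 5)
Your proof is correct and takes essentially the same route as the paper: your two stationarity equations are exactly the paper's first-order conditions (the $\kappa$-equation being $Q(\pi)=G(\kappa)$, the $\pi$-equation reducing to (\ref{FOC})), and your Hessian-determinant inequality is precisely the paper's second-order sufficiency condition. The only difference is that you spell out why (\ref{SOC}) implies that condition uniformly in $(\pi,\kappa)$ --- via $(1-\kappa)q''(\pi)\ge 0$ and $\Exp\bigl[Y^2(1-\kappa Y)^{-(1+\eta)}\bigr]\ge\Exp[Y^2]$ --- and why global strict concavity upgrades the interior critical point to a global maximizer, steps the paper asserts without detail.
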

\begin{proof}
The necessary first-order optimality conditions for existence of an interior optimal are
\begin{align*}
	q(\pi)+\eta b[\rho\sigma\pi -b\kappa]-\lambda\Exp\Bigl[\frac{Y}{(1-\kappa Y)^\eta}\Bigr]&=0,\\
	-q'(\pi)(1-\kappa)+\mu-r-\eta\sigma[\sigma\pi-\rho b\kappa]&=0,
\end{align*}
which are satisfied if (\ref{FOC}) holds. These in turn become sufficient if the second-order condition
\[
\eta\left[(1-\kappa)q''(\pi)+\eta\sigma^2\right]\left\{b^2+\lambda\Exp\Bigl[\frac{Y^2}{(1-\kappa Y)^{1+\eta}}\Bigr]\right\}>[q'(\pi)+\eta \rho b\sigma]^2
\]
is satisfied, which occurs if (\ref{SOC}) holds.
\end{proof}
\begin{remark}\label{rem-qprime}
If $\rho>0$ it suffices that $q'(\pi)>-\eta \rho b\sigma$ for all $\pi\in\R$ for $Q(\pi)$ to be strictly increasing and invertible on $\range G.$ The same assertion holds if $\rho<0$ and $q'(\pi)<-\eta \rho b\sigma$ for all $\pi\in\R.$
\end{remark}

\begin{example}
We set $q(\pi) := \lambda \mathbb{E}[Y] +C(\sqrt{ \pi^2+ A^2}-A).$ Note that $q(0)=\lambda \Exp Y,$ that is the fair premium if the agent has no exposure to the financial market.  The derivative $q'(\pi) = \frac{\pi C}{\sqrt{ \pi^2+ A^2}}$ satisfies $q'(\pi)>-\eta \rho b\sigma$ (resp. $q'(\pi)<-\eta \rho b\sigma$) if $\rho>0$ (resp. if $\rho<0$) and $C>\eta b \rho\sigma$ (resp. $C<-\eta b \rho\sigma$). Moreover, $\abs{q'(\pi)}<C$ so (\ref{SOC}) holds, for instance, if
\begin{equation}\label{SOC-2}
2[C^2+(\eta\rho b\sigma)^2]\le\eta^2\sigma^2\left\{b^2+\lambda\Exp[ Y^2]\right\}.
\end{equation}
As before, we assume $Y\sim \textrm{Beta}(\alpha,\beta).$ Then, (\ref{SOC-2}) reads
\begin{equation}
2[C^2+(\eta\rho b\sigma)^2]\le\eta^2\sigma^2\left\{b^2+\frac{\lambda\alpha(\alpha+1)}{(\alpha+\beta)(\alpha+\beta+1)}\right\}.
\end{equation}
If $\rho>0,$ the last inequality together with $C>\eta b \rho\sigma$ imply that the following must hold for $\eta$ 
\begin{equation}\label{interval-C-eta}
\frac{(b \rho\sigma)^2}{C^2}\le \frac{1}{\eta^2}\le \frac{\sigma^2}{2C^2}\left\{b^2(1-2\rho^2)+\frac{\lambda\alpha(\alpha+1)}{(\alpha+\beta)(\alpha+\beta+1)}\right\}.
\end{equation}
This condition implicitly requires that the following holds 
\[
b^2(1-4\rho^2)+\frac{\lambda\alpha(\alpha+1)}{(\alpha+\beta)(\alpha+\beta+1)}\geq 0.
\]
Note that this always holds if $0\le\rho\le 1/2.$ If, instead $\rho<0$ we can similarly deduce that $\eta$ must satisfy the following
\[
\frac{1}{\eta^2}\le \frac{\sigma^2}{C^2}\min\left\{\frac{1}{2}\left[b^2(1-2\rho^2)+\frac{\lambda\alpha(\alpha+1)}{(\alpha+\beta)(\alpha+\beta+1)}\right],(b\rho)^2\right\}.
\] 
Figure \ref{fig:StDom_s5} shows plots of $\hat{\kappa}$ as a function of $\rho$ for jumps with distributions Beta$(2,8)$ (in blue) and Beta$(12,8)$ (in orange), as the same parameter set $C_2$ from previous sections. We observe that monotonicity with respect to stochastic dominance also holds for this example. 
\begin{figure}[H]
	\centering
	\includegraphics[scale=0.5]{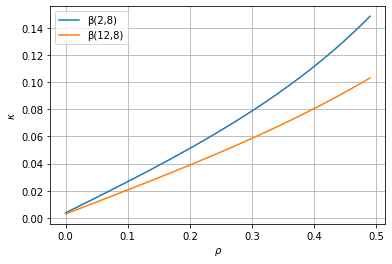}
	\caption{$\hat{\kappa}(\rho)$ for $f(\pi, \kappa) = -(1-\kappa)q(\pi)$ and parameter set $C_2$ with jump distributions Beta$(12, 8)$ (orange) and
Beta$(2, 8)$ (blue)}\label{fig:StDom_s5}
\end{figure}

\end{example}

\section{Conclusions}

This article considers a hypothetical setting that extends the findings of \cite{touzi2000optimal} and \cite{lin2012risky} to the case in which the agent's decision on portfolio choice and insurance demand cause nonlinear frictions in the dynamics of the wealth process. Using the associated HJB PDE, we found conditions under which the agent assumes the insurable risk entirely, partially, or purchases total insurance against it. We also proved that the insurance demand increases with the first-order stochastic dominance of the jump size and the magnitude of the jump arrival rate. 

We have paid special attention to piece-wise linear portfolio allocation frictions, with funding costs arising from differential borrowing and lending rates being the most emblematic example. We found sufficient conditions  under which the agent responds to the insurable background risk by investing at the (lower) lending risk-free rate $r$, holding only investments in risky assets, or leveraging the risky-asset portfolio by borrowing at the (higher) funding rate $R.$ The optimal investment allocation is given by the Merton proportion plus a hedging component against the insurable exogenous jump risk, so the agent uses the optimal demand in risky assets to manage its exposure to both financial and background risk. The correlations also contribute to the diversification effect. Finally, we proved a mutual-fund separation result which shows that optimal portfolio allocations move along one-dimensional segments. Hence, the optimal allocation for a given risk tolerance level can be obtained as a combination of two mutual funds. 

\subsubsection*{Acknowledgments}

This work was supported by Alianza EFI-Colombia Cientifica grant codes 60185 and FP44842-220-2018.

\section*{Declarations}

\begin{itemize}
\item Conflict of interest/Competing interests: Both authors certify that they have no affiliations with or involvement in any organization or entity with any financial interest or non-financial interest in the subject matter or materials discussed in this manuscript.

\item Authors' contributions: Both authors contributed equally to this manuscript.
\end{itemize}

\bibliographystyle{plainnat}
\bibliography{biblio_insurable_risk}

\end{document}